\newtheorem{thm}{Theorem}
\newtheorem{lemma}[thm]{Lemma}
\newtheorem{mechanism}[thm]{Mechanism}
\newtheorem{contract}[thm]{Contract}
\crefname{thm}{Theorem}{Theorems}
\newcommand{\E}{\mathbb{E}}
\newcommand{\negl}{\textsf{negl}}
\begin{document}

\begin{frontmatter}

\title{Stackelberg Attacks on Auctions and Blockchain Transaction Fee Mechanisms}

\author[A]{\fnms{Daji}~\snm{Landis}\orcid{0000-0002-9985-0552}}
\author[B]{\fnms{Nikolaj I.}~\snm{Schwartzbach}\orcid{0000-0002-0610-4455}}
\address[A]{Bocconi University}
\address[B]{Department of Computer Science, Aarhus University}

\begin{abstract}
We study an auction with $m$ identical items in a context where $n$ agents can arbitrarily commit to strategies. In general, such commitments non-trivially change the equilibria by inducing a metagame of choosing which strategies to commit to. In this model, we demonstrate a strategy that an attacker may commit to that ensures they receive one such item for free, while forcing the remaining agents to enter into a lottery for the remaining items (albeit for free). The attack is thus detrimental to the auctioneer who loses most of their revenue. For various types of auctions that are not too congested, we show that the strategy works as long as the agents have valuations that are somewhat concentrated. In this case, all agents will voluntarily cooperate with the attacker to enter into the lottery, because doing so gives them a chance of receiving a free item that would have otherwise cost an amount commensurate with their valuation. The attack is robust to a large constant fraction of the agents being either oblivious to the attack or having exceptionally high valuations (thus reluctant to enter into the lottery). For these agents, the attacker may coerce them into cooperating by promising them a free item rather than entering in to the lottery. We show that the conditions for the attack to work hold with high probability when (1) the auction is not too congested, and (2) the valuations are sampled i.i.d. from either a uniform distribution or a Pareto distribution. The attack works for first-price auctions, second-price auctions and the transaction fee mechanism EIP-1559 used by the Ethereum blockchain.

The problem we study is natural in Web3 systems where agents natively interact using a blockchain. Thus, the agents are capable of deploying smart contracts that commit them to placing certain bids. In particular, the setting of an auction with multiple identical items models the transaction fee mechanisms that are used by blockchains to determine which transactions to include in the next block. Our work demonstrates that these mechanisms, in theory, are vulnerable to these attacks and may be cause for re-evaluation of the use of auctions in transaction fee mechanisms, at least when the networks are not too congested.
\end{abstract}

\end{frontmatter}

\section{Introduction}
Consider $n$ agents participating in an auction with $m$ copies of the same item. Each agent $i$ receives utility $v_i>0$ by obtaining one of the copies. Assume that all $v_i$ are distinct and ordered $v_1 < v_2 < \cdots < v_n$. Each agent places a bid $b_i \geq 0$ and the $m$ agents with the highest bids receive a copy of the item, at the cost of paying some function of the bids. If there are multiple agents with the same bid, the mechanism chooses uniformly at random between these agents. If $m\geq n$ then all agents receive a copy of the item, in which case the optimal strategy for each agent is to bid $b_i=0$. Thus, we will assume that $n = (1+\alpha)\,m$ for some \emph{congestion constant} $\alpha > 0$. 

In a first-price auction, an agent pays their own bid which results in untruthful behavior: it is well-known that the best response for an agent $i$ is to slightly outbid agent $n-m$ if their valuation exceeds this bid. That is, agent $i$ will place the following bid.
\begin{equation}
    b_i = 
    \begin{cases}
        v_{n-m} + \varepsilon&\text{if $i>{n-m}$,}\\
        0 & \text{if $i \leq n-m$.}
    \end{cases}
\end{equation}
Where $\varepsilon>0$ is some small constant, representing a negligible amount of money. It is not hard to see that this bidding strategy is indeed an equilibrium (at least up to $\varepsilon$). Of course, this requires the parties to be able to estimate the valuations of other parties. In some applications, this might not be a realistic assumption. Instead, the mechanism can be made truthful by letting each party with a winning bid pay $b_{n-m}$, a second-price\footnote{Technically, the auction should be called a $(n-m)^\text{th}$-price auction, or a Vickrey auction; we stick to second-price for simplicity.} auction \cite{vickrey}. In this case, it can be shown that the proposed mechanism is truthful so that each party will bid their valuations \cite{vickrey,clarke,groves}. While truthfulness is a desirable property, these auctions may be vulnerable to collusion \cite{roughgarden_eip1559}. 

\paragraph{Blockchains.} The auction described is also known as a \emph{transaction fee mechanism} and is used in blockchains to determine which transactions to include in the next block of data to include in the chain \cite{chung_shi_dream_mechanism}. Here, all pending transactions are public so it is reasonable to assume agents know the valuations of other parties. Blockchains canonically store transactions of cryptocurrency between different accounts \cite{chaum1,chaum2,bitcoin}, though many blockchains have since generalized this to support arbitrary execution of code, so-called smart contracts \cite{ethereum}. Smart contracts are decentralized programs that run on a virtual machine implemented by the blockchain. A smart contract maintains state, can transfer funds between parties, and responds to queries. A smart contract is guaranteed to be faithful to its implementation by security of the underlying blockchain \cite{ouroboros,kachina}. 

\paragraph{Stackelberg Equilibria.} It is well-known that being to commit to strategies, in general, changes the equilibria of the game by allowing an agent to commit to acting irrationally in some subgame, thus changing the equilibria of the game. The case with one agent being allowed to commit to strategies is known as a Stackelberg equilibrium and were first used in economics to model competing firms where one firm (the leader) has market dominance \cite{stackelberg}. This was later generalized to scenario where the leader commits to a strategy that depends on the strategies chosen by the other players, in what is known as reverse Stackelberg equilibria \cite{reverse_stackelberg,reverse_stackelberg_2}. This was further generalized by Hall-Andersen and Schwartzbach~\cite{smart_contracts} who consider a model of `universal commitments' where all players have smart contracts that are allowed to depend on each other sequentially. They show that this constitutes a hierarchy of equilibria that generalizes Stackelberg equilibria and reverse Stackelberg equilibria.

In this paper, we study transaction fee mechanisms involving agents who can universally commit to strategies using e.g. smart contracts. We call such attacks `Stackelberg attacks' and ask the following natural question.
\begin{quote}
    \em How do universal commitment to strategies impact the equilibria of transaction fee mechanisms?
\end{quote}
We show that these commitments drastically change the structure of equilibria for various types of auctions, thus opening for a Stackelberg attack wherein the buyers spontaneously organize to conspire against the auctioneer. In the attack, some agent commits to a strategy that ensures that they receive one of the items for free, while the remaining agents enter into a lottery for the remaining space on the block. The attack benefits all the buyers but is detrimental to the auctioneer who stands to lose most of their revenue. Note that while blockchains and smart contracts provide a natural setting in which to study these attacks, in principle the same framework can be used to analyze any setting in which agents can credibly commit to strategies, e.g. through reputation or by staking money. Understanding these attacks may also be important in predicting the behavior of advanced intelligent systems that have access to the internet (hence access to a blockchain).

\subsection{Our Results}
We demonstrate the existence of a Stackelberg attack on the transaction fee mechanism EIP-1559, which is used by Ethereum. This mechanism is a generalization of first-price auctions intended to fix various problems with first-price auctions in the context of transaction fee mechanisms \cite{roughgarden_eip1559}. By corollary, we show an attack on first-price auctions, which are used as transaction fee mechanisms in most other blockchains. The attack allows any agent to ensure they receive a copy of the item for free, while forcing (most of) the other agents to participate in a lottery for the remaining space. The attack works as long as the valuations are concentrated, in the sense that the largest values are not too much larger than the middle values. In this case, each agent voluntarily chooses the lottery because doing so will award them the item for free at some cost, while in the auction they would have to pay an amount commensurate with their valuation. If instead the valuations were spread out, the agents with a high valuation would not participate because they would be getting the item for a price much lower than their valuation.
\begin{thm}[Informal]\label{thm_intro_2}
    Let $v_1 < v_2 < \cdots < v_n$ be the valuations of the agents and $n>m$. Then if for some $k<m$, it holds that,
    $$
        \frac{v_{n-k+1}}{v_{n-m}} < \frac{n-k}{n-m},
    $$
    then EIP-1559 (including first-price auctions), as well as second-price auctions, are not side contract resilient.
\end{thm}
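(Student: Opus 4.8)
The plan is to exhibit an explicit side contract and show that, under the stated condition, it forms a stable coalition that strictly benefits the attacker (and harms nobody) while collapsing the auctioneer's revenue. Concretely, the attacker commits to a contract that reserves one item for itself for free, hands a guaranteed free item to each of the $k-1$ agents with the highest valuations (those with valuations $v_{n-k+2},\dots,v_n$), and pools the remaining $m-k$ items into a uniform lottery among the other $n-k$ agents. On the equilibrium path every cooperating agent bids $0$, with the attacker and the $k-1$ bought-off agents bidding an infinitesimal $\varepsilon$ to claim their reserved slots, so that the mechanism's uniform tie-breaking rule awards each lottery participant an item with probability exactly $\frac{m-k}{n-k}$, and, crucially, every winner pays essentially nothing. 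Here $k<m$ guarantees $m-k\geq 1$, so the lottery is nonempty.

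First I would verify that this on-path profile realizes the claimed payoffs in each mechanism: in a first-price auction and in EIP-1559 the winners pay only their own (vanishing) bids, while in a second-price auction the price is the highest losing bid, which is $0$ since all non-winners bid $0$; thus in every case the items are awarded for free up to $\varepsilon$. The attacker then obtains utility $v_{\text{attacker}}$, each bought-off agent obtains its full valuation, and each lottery participant $i$ obtains expected utility $\frac{m-k}{n-k}\,v_i$.

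The core of the argument is incentive compatibility, i.e.\ that no agent wishes to deviate given the attacker's commitment. A bought-off agent and a lottery winner already hold a free item, so the only tempting deviation is for a lottery participant to abandon the pool and try to win an item outright. I would argue that the coalition's commitment lets it restore competitive bidding against any such defector, so that the best a defector with valuation $v$ can secure is the competitive outcome: a single item at the clearing price $v_{n-m}$, for net utility $v-v_{n-m}$. Cooperation is then individually rational precisely when $\frac{m-k}{n-k}\,v \geq v - v_{n-m}$, which rearranges to $\frac{v}{v_{n-m}} \leq \frac{n-k}{n-m}$. Since valuations are increasing, it suffices to check this for the highest-valuation lottery participant, which has valuation at most $v_{n-k+1}$, and this is exactly the hypothesis of the theorem. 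Hence every agent weakly prefers to play along.

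Finally I would conclude that the mechanism is not side-contract resilient: no agent is worse off and several are strictly better off — the attacker gets a free item rather than its honest payoff $\max(0,\,v_{\text{attacker}}-v_{n-m})$, and every low-valuation participant, who would lose and get $0$ honestly, now wins with positive probability — while the auctioneer's revenue drops from roughly $m\,v_{n-m}$ to $0$. The hard part is this incentive step: I must pin down the defector's outside option in the universal-commitment model, showing that the coalition can credibly commit (through its smart contracts) to reintroduce enough competing bids to push a defector's price up to $v_{n-m}$, while a committed high-valuation defector can do no better than the competitive payoff $v-v_{n-m}$. This requires reasoning about the sequential commitment hierarchy rather than a single best response, and it is also what forces the attacker to buy off the top $k-1$ agents — those for whom the lottery cannot match the competitive payoff — rather than placing them in the pool.
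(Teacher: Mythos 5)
Your proposal is correct and follows essentially the same route as the paper: the attacker's contract reserves a free slot for itself and for the $k-1$ highest-valuation agents, pools the remaining $m-k$ slots into a uniform lottery among the other $n-k$ agents, and credibly threatens to revert everyone to the competitive first-price bid $v_{n-m}+\varepsilon$ against any defector, so that cooperation is individually rational exactly when $\frac{v_{n-k+1}}{v_{n-m}}<\frac{n-k}{n-m}$. The "hard part" you flag is handled in the paper precisely as you sketch, via contracts $A^C_u$ that condition on all later agents deploying the same contract and otherwise play the first-price equilibrium bid.
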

This is shown by explicitly demonstrating a strategy that an agent may commit to for which the equilibrium involves most parties entering into a lottery as described. The strategy extends also to second-price auctions.

We evaluate the economic efficiency of this new situation and show that, while the attack benefits all users, it is detrimental to the auctioneer. This impact on auctioneer suggests that successful and widespread deployment of the attack would be detrimental to the viability of running the auctions. Therefore, our analysis is grounds for reevaluation of the auctions for transaction fee mechanisms. Formally, we define the \emph{price of defiance} as the ratio between the utility an agent receives by cooperating versus the utility they would receive by deviating (or \emph{defying} the attacker). We give a probabilistic bound on the price of defiance for the attack.
\begin{thm}[Informal]
    Suppose $n$ agents participate in an auction with $m$ identical items, and $n = (1+\alpha)\,m$ for some $\alpha > 0$. If the agents have valuations that are i.i.d. uniform, then with high probability, the price of defiance is at least $1+\alpha$.
\end{thm}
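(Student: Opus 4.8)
The plan is to reduce the price of defiance to an explicit ratio of order statistics and then control that ratio with standard concentration for uniform order statistics. First I would pin down which agent makes the price of defiance smallest, since the claim ``at least $1+\alpha$'' is really a statement about the agent most tempted to deviate. Under the attack, an agent who is promised a free item obtains utility $v_i$ by cooperating, whereas defying throws them back into the competitive auction, where the best they can do is win at the market-clearing price $v_{n-m}$ for a net utility $v_i - v_{n-m}$. Hence the per-agent ratio is $\tfrac{v_i}{v_i - v_{n-m}}$, which is decreasing in $v_i$, so the binding agent is the highest bidder and
\[
  \text{(price of defiance)} \;=\; \frac{v_n}{v_n - v_{n-m}}.
\]
Using $n-m = \tfrac{\alpha}{1+\alpha}\,n$, the goal $\tfrac{v_n}{v_n - v_{n-m}} \ge 1+\alpha$ is then equivalent to the clean quantile statement $\tfrac{v_{n-m}}{v_n} \ge \tfrac{\alpha}{1+\alpha}$.

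Next I would invoke the distribution of the valuations. For i.i.d.\ samples from $U[0,1]$, the order statistic $v_{(j)}$ is $\mathrm{Beta}(j,\,n+1-j)$, with mean $\tfrac{j}{n+1}$ and variance $O(1/n)$; equivalently $\Pr[v_{(j)} \le x] = \Pr[\mathrm{Bin}(n,x) \ge j]$, which permits a Chernoff bound in place of Chebyshev for sharper tails. Two estimates suffice: (i) $v_n \to 1$, since $\Pr[v_n \le 1 - t] = (1-t)^n \le e^{-tn}$; and (ii) $v_{n-m}$ concentrates at its mean $\tfrac{n-m}{n+1} = \tfrac{\alpha}{1+\alpha} - O(1/n)$ with fluctuations of order $n^{-1/2}$. (For a statement uniform over all ranks one could instead quote the DKW inequality, which controls every $v_{(j)}$ simultaneously and simplifies the bookkeeping.) Combining (i) and (ii) gives $\tfrac{v_{n-m}}{v_n} \ge \tfrac{\alpha}{1+\alpha} - o(1)$ with probability $1-o(1)$, and feeding this through $\tfrac{1}{1 - v_{n-m}/v_n}$ yields a price of defiance of at least $(1+\alpha) - o(1)$.

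The main obstacle is that the bound is tight at leading order. The ratio of expectations $\tfrac{\E[v_{n-m}]}{\E[v_n]} = \tfrac{n-m}{n} = \tfrac{\alpha}{1+\alpha}$ sits exactly on the threshold, and because $v_{n-m}$ fluctuates symmetrically by $\Theta(n^{-1/2})$ about its mean while $v_n$ lies below $1$ by only $O(1/n)$, the event $\{v_{n-m}/v_n \ge \tfrac{\alpha}{1+\alpha}\}$ has probability bounded away from $1$. Consequently one cannot hope to prove the literal inequality $\ge 1+\alpha$ with high probability; the honest statement is $\ge 1+\alpha - o(1)$, matching the ``Informal'' label. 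The quantitative work is therefore to choose an error window $\epsilon_n \to 0$ that vanishes more slowly than the concentration rate (e.g.\ $\epsilon_n = n^{-1/3}$), so that the failure probability from (i) and (ii) still tends to $0$ while $\epsilon_n$ is absorbed into the $o(1)$. A secondary point, where the structural content of the attack genuinely enters, is justifying that the minimizing agent is the top (coerced) bidder rather than a lottery participant: the latter are engineered by the attack to be essentially indifferent, so one must verify they do not produce a smaller ratio under the intended notion of defiance.
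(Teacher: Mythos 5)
Your reduction targets a different quantity from the one the paper actually bounds. In the paper the price of defiance is defined as a ratio of \emph{social welfares}, $PoD = \max_{s\in C}\mathrm{Welf}(s)/\min_{s\in E}\mathrm{Welf}(s)$, in the spirit of the price of anarchy: the numerator is the total agent welfare in the best post-attack equilibrium (attained when the highest-valuation agent leads, giving $\sum_{j=1}^{n-1}\tfrac{m-1}{n-1}v_j + v_n$ up to $\varepsilon$'s), and the denominator is the total welfare of the ordinary first-price equilibrium, $\sum_{i=n-m+1}^{n}(v_i - v_{n-m})$. The paper then substitutes $\E[v_{(i)}]=i/(n+1)$, applies a $\tilde O(1/n)$ concentration bound for uniform order statistics (via a Beta tail bound and a union bound), and simplifies the resulting ratio to $\tfrac{n(m+1)}{m(m+1)}-o(1)=1+\alpha-o(1)$. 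Your proposal instead bounds a per-agent cooperate/defy utility ratio; the two quantities happen to agree at leading order for the leading contract agent, but establishing the latter does not establish the former.

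Even granting the per-agent reading, your identification of the ``binding'' agent fails. You restrict attention to agents who are promised a free item, but (with $k=1$, as in the paper's analysis) that is only the attacker; every other agent is a lottery participant who cooperates for expected utility $\tfrac{m-1}{n-1}v_j$ and defies for $v_j - v_{n-m}$ when $j>n-m$. The compliance condition that makes the attack work guarantees only that this ratio exceeds $1$, and it is genuinely close to $1$ for agents near the threshold of the concentration condition --- the lottery participants are engineered to be barely better off cooperating. So the minimum over all agents of your ratio is $1+o(1)$, not $1+\alpha$; the ``secondary point'' you defer at the end is fatal to the reduction rather than a bookkeeping detail. The probabilistic half of your argument (Beta-distributed order statistics, $v_{n-m}\approx\tfrac{\alpha}{1+\alpha}$, $v_n\approx 1$, and the correct observation that the constant is tight so only $1+\alpha-o(1)$ is provable) is sound and essentially the same machinery the paper uses, but it is being fed into the wrong deterministic identity. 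To repair the proof, replace the per-agent ratio with the welfare sums above and run the same concentration argument term by term.
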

 We show that the conditions required to apply \cref{thm_intro_2} are natural, in the sense that they are satisfied with high probability at certain levels of congestion when the valuations are sampled from two natural distributions.
\begin{thm}[Informal]
    Suppose $n$ agents participate in an auction of $m$ identical items, and $n = (1+\alpha)\,m$ for some $\alpha>0$. Then the conditions required for \cref{thm_intro_2} to apply hold with overwhelming probability if either of the following two conditions are satisfied.
    \begin{enumerate}
        \item The valuations are sampled i.i.d. from a uniform distribution and, 
        $$
            0 \leq \alpha < 0.53.
        $$
        \item The valuations are sampled i.i.d. from a Pareto distribution with parameter $p>1$ and $0 \leq \alpha < \alpha(p)$ for some function $\alpha$ with, 
        $$
            \lim_{p \rightarrow \infty} \alpha(p) \approx 0.69.
        $$
    \end{enumerate}
\end{thm}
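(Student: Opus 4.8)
The plan is to convert the statement into a deterministic inequality about the quantile function of the sampling distribution and then verify that inequality, with the right tail probability, for each of the two families. Throughout, write $c = m/n = 1/(1+\alpha)$ and let $Q = F^{-1}$ be the quantile function. Dividing the hypothesis of \cref{thm_intro_2} by $v_{n-m}$ and by $n-m$, and setting $x = k/n \in (0,c)$, the condition ``$\exists k<m$'' becomes: there is some $x \in (0,c)$ with $\frac{v_{n-k+1}}{1-x} < \frac{v_{n-m}}{1-c}$. Because the indices $n-k+1$ and $n-m$ both sit at a constant distance (as a fraction of $n$) from the top, the corresponding order statistics concentrate. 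Concretely, I would invoke the R\'enyi representation, writing the uniform order statistics as $U_{(i)} = S_i/S_{n+1}$ for partial sums $S_i$ of i.i.d.\ $\mathrm{Exp}(1)$ variables and $v_{(i)} = Q(U_{(i)})$; this yields $v_{n-k+1} = Q(1-x) + O(n^{-1/2})$ and $v_{n-m} = Q(1-c)+O(n^{-1/2})$ with explicit, asymptotically Gaussian fluctuations. Up to these corrections the event reduces to $h(x) < h(c)$, where $h(u) := Q(1-u)/(1-u)$, so the whole problem becomes understanding the shape of $h$ together with the size of the fluctuations.

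For the Pareto distribution $Q(1-u) = u^{-1/p}$, so $h(u) = u^{-1/p}/(1-u)$; a one-line calculation shows $\log h$ has derivative $\tfrac1{1-u} - \tfrac1{pu}$, hence $h$ decreases then increases with a unique interior minimum at $u^\star = 1/(1+p)$. Thus at leading order the deterministic condition $h(x)<h(c)$ is solvable for some $x<c$ exactly when $c>u^\star$, and one simply takes $x$ just below $c$ on the increasing branch. To upgrade this to a statement holding with \emph{overwhelming} probability, and to recover the stated bounded threshold $\alpha(p)$, I would track the regime in which the useful values of $x$ are pushed toward $0$ (equivalently $k$ small), where $Q(1-u)=u^{-1/p}$ blows up and the extreme order statistics no longer concentrate. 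Controlling this heavy upper tail via the exponential-spacings representation of the top order statistics is what caps the admissible congestion, and I expect it to produce the limit $\alpha(p) \to 0.69 \approx \ln 2$ as $p \to \infty$.

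For the uniform distribution $Q(t)=t$, so $h \equiv 1$ and the leading-order inequality is an exact tie for \emph{every} $x$; the first-order quantile approximation carries no information and the outcome is governed entirely by the $O(n^{-1/2})$ fluctuations retained above. I would therefore keep the Gaussian corrections and reduce the event to a first-passage problem. Rewriting the condition as $(n-m)\bigl(S_{n-k+1}-S_{n-m}\bigr) - (m-k)\,S_{n-m} < 0$, one sees that, conditioned on $S_{n-m}$, these quantities form (as $k$ ranges over $\{1,\dots,m-1\}$) a random walk in the independent increments $E_{n-m+1},E_{n-m+2},\dots$ with a random linear drift whose slope is the independent variable $S_{n-m}$; the condition holds iff this walk dips below $0$ within the allotted range. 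The threshold $\alpha < 0.53$ should then be the largest $\alpha$ for which this crossing occurs with overwhelming probability, emerging as the root of a transcendental equation from the first-passage/large-deviation estimate.

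The main obstacle is exactly this passage from the leading-order deterministic condition to a statement holding with overwhelming probability, and the extraction of the constants $0.53$ and $0.69$. For the uniform family the entire effect is second order, so everything rests on a careful first-passage analysis of a random walk with random drift; the delicate point is that each individual index $k$ satisfies the inequality only with probability about $\tfrac12$, so one must exploit the joint behaviour over the whole range of $k$ and show the failure probability is negligible rather than merely $o(1)$. For the Pareto family the obstacle is instead the non-concentration of the extreme order statistics: the naive bulk argument is too optimistic near $x=0$, and bounding the heavy-tail contribution uniformly in $p$ is what pins down the true, bounded threshold.
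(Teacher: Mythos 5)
Your route is genuinely different from the paper's. The paper never passes through quantile functions or the R\'enyi representation: it fixes a single coalition size $k=\delta m$ (with $\delta=0.69$ for the uniform case and $\delta=5/(p^2+4)$ for Pareto), takes the exact density of the ratio $R=v_{n-k+1}/v_{n-m}$ from \cref{lemma:ratio_order_stats} resp.\ \cref{lemma:pareto_order_stats}, integrates the tail beyond $\frac{n-k}{n-m}$, and controls the resulting binomial coefficients via $\binom{n}{k}\le 2^{nH(k/n)}$ and $H(x)\le 2\sqrt{x(1-x)}$. For the Pareto half your leading-order reduction to $h(x)<h(c)$ with $h(u)=u^{-1/p}/(1-u)$ is correct, and if you complete the concentration step at the interior quantile $x=u^\star=1/(1+p)$ it yields the \emph{stronger} sufficient condition $\alpha<p$, which covers the stated range since $p>1$. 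Two of your side remarks are off, however: the ``non-concentration near $x=0$'' obstacle is a red herring, since you are free to take $x=u^\star$, a fixed interior quantile where order statistics concentrate perfectly well; and the constant $0.69$ is not $\ln 2$ but $(e^{2/\sqrt5}-1)^{-1}=\tfrac12(\coth(1/\sqrt5)-1)$, an artifact of the paper's choice of $\delta$ and its entropy bounds rather than of any heavy-tail phenomenon.

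The genuine gap is the uniform case, and your own (correct) observation is what exposes it. Since $h\equiv1$, the event is decided entirely by fluctuations, and for each fixed $k$ the condition $v_{n-k+1}/v_{n-m}<\frac{n-k}{n-m}$ holds with probability tending to $\tfrac12$: indeed $v_{n-m}/v_{n-k+1}\sim\mathrm{Beta}(n-m,\,m-k+1)$ exactly, so the condition requires it to exceed $\frac{n-m}{n-k}$, which sits $\Theta(1/n)$ \emph{above} its mean $\frac{n-m}{n-k+1}$, while its standard deviation is $\Theta(1/\sqrt n)$. You then assert, without any estimate, that the union over $k$ of these probability-$\tfrac12$ events occurs with overwhelming probability whenever $\alpha<0.53$; this is the entire content of the claim and it does not hold. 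Conditioning on $S_{n-m}$ falling a constant number of standard deviations below its mean (a constant-probability event), your random walk acquires a positive drift of the same order as its fluctuations and starts at a positive value of order $n$, so it stays nonnegative for every $k$ with probability bounded away from zero. Hence ``overwhelming probability'' is unattainable by this route for any $\alpha>0$, and the constant $0.53$ cannot emerge from a first-passage estimate. Be aware that this conclusion is in direct tension with the paper's own proof, which fixes $k=0.69m$ and derives a negligible tail; in reconciling the two, check the exponent of $r$ in the paper's expression for $f(r)$, which by \cref{lemma:ratio_order_stats} should be $r^{\,n-k+1}$ rather than $r^{\,n}$.
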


Our work highlights the difficulty in designing smart contracts and suggests that other smart contracts that have already been deployed on major blockchains may be susceptible to Stackelberg attacks.

\subsection{Related Work}
Stackelberg equilibria are quite well-studied and are important e.g. in control theory \cite{basar1979closed,bloem2010stackelberg,roughgarden_stackelberg,doan2021peer} and security games \cite{korzhyk2011stackelberg,kar2017trends,sinha2018stackelberg}. In general, these commitments change the equilibria in highly non-trivial ways \cite{amir_grilo_1999,sherali1983stackelberg}, and they are known to be hard to compute in general \cite{stackelberg_complexity,korzhyk2010complexity,bai2021sample}, though there are some games for which the Stackelberg equilibrium can be shown to coincide with the subgame perfect equilibrium (SPE) \cite{basu1995stackelberg}. Variants of Stackelberg equilibria are known for some auction scenarios, e.g. for all-pay auctions with complete information \cite{konrad2007generalized}, for procurement auctions \cite{garg2005design,garg2008mechanism}, and for repeated auctions \cite{nedelec2020robust}. Reverse Stackelberg equilibria are less studied, however they also find applications in routing \cite{groot2014toward}, in control theory \cite{groot2017hierarchical,tajeddini2020decentralized}, and sparsely in auctions \cite{nedelec2019adversarial} though in a different context than what we consider in this work. In fact, the attack we consider in this work only works for $n \geq 3$ which means it inherently eludes analysis as a (reverse) Stackelberg equilibrium. Little is known of the generalizations of Stackelberg equilibria that we study in this work, aside from the complexity results shown in \cite{smart_contracts}. 

In recent years, there has been increased interest in analyzing blockchain transaction fee mechanisms using techniques from classic mechanism design. A line of work, \cite{btc_fee_markets,roughgarden_eip1559}, identifies three desiderata of such mechanisms:
\begin{enumerate}
    \item \emph{user-incentive compatibility (UIC)}. The users are incentivized to bid truthfully;
    \item \emph{miner-incentive compatibility (MIC)}. The miners are incentivized the implement the mechanism as prescribed;
    \item \emph{off-chain agreement proofness (OCA proofness)}. No coalition of miners and users can increase their joint utility by deviating from the mechanism. 
\end{enumerate}
In \cite{roughgarden_eip1559}, Roughgarden shows that EIP-1559 satisfies MIC and OCA proofness when the block size is large and shows that it is not UIC, in the sense that users may benefit by bidding strategically. Here, OCA proofness means that the users and the miner cannot benefit by agreeing to off-chain payments and thus captures a specific type of commitment to strategies. Chung and Shi~\cite{chung_shi_dream_mechanism} show that no mechanism can simultaneously be UIC and 1-OCA proof. These results are shown in a model where agents cannot universally commit to strategies, and indeed we show that, arguably, neither of these three properties hold in a model where the agent can universally commit to strategies.
\section{A Stackelberg Attack on Auctions}
\label{sec:attack}

We will consider a set of $n$ transactions competing for space on a block of size $m$. Each transaction is assumed to be owned by exactly one agent that we identify with the integers $\{1,2,\ldots,n\}$. Each agent $i$ has a valuation $v_i>0$ of their transaction, which is the utility they gain by having their transaction included in the block for free. We assume the agents are rational, risk-neutral, and have a quasilinear utility functions. We will typically take each $v_i$ as sampled i.i.d. from some known underlying distribution ${D}$. It will be convenient to assume that agents know each others' valuations precisely, i.e. we assume the values $v_1, v_2, \ldots, v_n$ are public and known to all the agents. Although this assumption is false in practice, by fixing ${D}$, the parties can mostly infer the valuations of the other parties, as these values will be highly concentrated around their expectations, if the number of agents is sufficiently large. This approach is used in practice on Ethereum, where several services provide tip estimations based on the current network congestion \cite{donmez2022transaction}. 

We assume each agent is capable of deploying a smart contract that may bid on their behalf, conditioned also on the smart contracts deployed by the other agents. To formalize this, we may use the model of \cite{smart_contracts}: fix some extensive-form representation of the sealed-bid auction, e.g. (1) choose an arbitrary order of the agents, (2) construct the $n$-horizon game with the agents in the specified order with each layer having a subgame corresponding to each bid that a specified agent may place, (3) add information sets to ensure agents are not aware of the bids made by the other agents, (4) add utility vectors corresponding to the type of auction (first-price, second-price, etc.). Finally, add `smart contract moves' to the top of the game tree for each player. These moves are special nodes that are syntactic sugar for the larger `expanded tree' that results from computing all appropriate cuts in the game tree and reattaching them with a node belonging to that player. By expanding these moves in a bottom-up fashion, this gives a natural way for contracts to condition on the contracts deployed by other agents and is shown to generalize (reverse) Stackelberg equilibria. For more details, we refer to \cite{smart_contracts}, though we trust that the intuitive understanding of `contracts that depend on other contracts' suffices for the purposes of this work. An auction that is weakly strategically equivalent (i.e. the equilibrium payoffs are equal) to itself with smart contract moves is said to be \emph{Stackelberg resilient}. 

We now give our model of the transaction fee mechanism EIP-1559 used by Ethereum since 2021\footnote{In practice, the block size of EIP-1559 is variable and we shall let $m$ denote its maximum possible value. In practice, the base fee would be adjusted to ensure that $\E[n]=m/2$, however the case of $n \leq m$ is not interesting (as all transactions will simply be included) so we take $m$ to be the maximum value and assume $n>m$.}. It generalizes first-price auctions by including a \emph{base fee} $B \geq 0$ that each agent has to pay that is burned. The base fee is continuously adjusted by the network to balance the demand to ensure each block is half full (in expectation). A first-price auction with $m$ identical items is retained as a special-case when $B=0$.

\begin{mechanism}\hspace{-1mm}\textsc{\textup{(EIP-1559)}}.\label{alg:eip-1559}
\begin{enumerate}
    \item[1.] Each party $i\in[n]$ submits a transaction of value $v_i>0$ and makes a deposit of $B+\tau_i$ funds where $\tau_i\geq0$ is an optional tip.
    \item[2.] A miner finds a block, and selects a $T\subseteq [n]$ with $|T|=m$ that maximizes $\sum_{i \in T} \tau_i$. If there are multiple such $T$'s, it selects $T$ uniformly at random from all suitable sets.
    \item[3.] Each party $i \in T$ has their transactions included in the block and loses their deposit, in total gaining $v_i - B - \tau_i$ money; each party $j\not\in T$ is returned their deposit of $B+\tau_j$ money and gains 0.
    \item[4.] The miner receives $\sum_{i \in T} \tau_i$ money.
    \item[5.] The network adjusts the base fee $B$ depending on $m$ and $n$.
\end{enumerate}
\end{mechanism}

Keeping in tune with auction terminology, moving forward we will refer to the miner as the \emph{auctioneer}. As per the introduction, we will let $n=(1+\alpha)\,m$ for some \emph{congestion constant} $\alpha>0$. Let $\varepsilon>0$ be the smallest unit of currency, and assume it is sufficiently small, i.e. $\varepsilon\ll v_i$, to mostly be ignored in calculations. In practice, on Ethereum, as of 2022, we have $\varepsilon \approx \$10^{-12}$. 

We now propose a Stackelberg attack on \cref{alg:eip-1559}: essentially, the leading contract agent commits to paying $2\varepsilon$, conditioned on everyone else committing to bidding $\varepsilon$. In this case, the leading contract agent has their transaction included at essentially zero cost, while everyone else enters into a lottery. If anyone does not comply, the leading contract agent instead submits the bid they would have submitted without the contracts, or one slightly higher. This forces each other agent to decide between a lottery and a first-price auction. We will show that when the valuations of the transactions are somewhat concentrated, the agents prefer the lottery over the first-price auction, as they would otherwise have to pay a bid commensurate with their valuation, while in the auction they may receive the item for free.

As a warm-up and ongoing example, we look at the case where there are three agents and two slots up for auction, that is $n=3$ and $m=2$, to illustrate the attack. This models a case where there are three buyers that wish to purchase two identical items --- we may imagine these to be exchanges that control large quantities of user transactions, such as e.g. Coinbase or Binance \cite{alexander2022role}. Note that in this example we have $\alpha=\frac12$. Suppose that agents $1,2,3$ have valuations $0<v_1 < v_2 < v_3$, respectively. In a first price auction, where the valuations of the respective parties are known, the $m$ agents with the highest valuations only need to outbid the agent with $m+1$ highest valuation, who is unwilling to bid beyond their valuation and receive negative utility. In our example, agents 2 and 3 will bid slightly higher than the valuation of agent 1, yielding the following utilities:
$u_1 = 0$, $u_2 = v_2 - v_1 - \varepsilon$, $u_3 = v_3 - v_1 - \varepsilon$.

We now equip these three agents with contracts.  If the agent with the leading contract can make a credible and enforceable threat with the contract, they may force other agents to accept the lottery at the price $\varepsilon$, thereby guaranteeing the leading agent space an item at price of 
 $2\varepsilon$. The viability of such a threat depends on the agents' valuations. Agents will only comply if their expected utility is higher when they cooperate compared to when the threat is executed.

Consider first the case when agent $3$ is the leading contract agent. The contract will commit agent $3$ to bidding either $2\varepsilon$, if the two other agents commit to playing $\varepsilon$, or to bidding the usual first price bid of $v_1+\varepsilon$ otherwise.  If the contract works, agent $3$ enjoys utility $v_3 - 2 \varepsilon$, a better result than the first price auction utility of $v_3 - v_1 - \varepsilon$. The desirable outcome is also clear for agent 1: the lottery case yields utility $\frac{1}{2}(v_1 - \varepsilon)$, which is better than the first price auction utility of $0$. Therefore, both $1$ and $3$ will submit to the contract. Agent 2 will cooperate if the first price utility is lower than the lottery utility, that is if $v_2-v_1 -\varepsilon < \frac{1}{2}(v_2-\varepsilon)$, which reduces to $v_1 + \frac{1}{2} \varepsilon > \frac{1}{2}v_2$. The attack would not work if the valuations were less concentrated. If agent 2 is the lead contract holder, the attack works if $v_1 + \frac{1}{2} \varepsilon > \frac{1}{2}v_3$, a more stringent concentration requirement.  If agent 1 has the leading contract, they may threaten to bid $v_2 + \varepsilon$, knowing they will likely not have to pay it. In this scenario, agent 1 has a credible threat if $v_2 + \frac{1}{2} \varepsilon > \frac{1}{2}v_3$, similar to the conditions for agent 3.

The attack generalizes readily to a larger number of agents, although the requirement on the valuations becomes stronger with more agents. In particular, the attack no longer works if even a single agent has a valuation that is significantly higher than the median. However, the leading contract agent may persuade such agents into participating by promising them a free item from the auction, taking some of the spots intended for the lottery. We denote by $C \subseteq [n]$ the \emph{coalition} of agents (with $|C|=k$ for some $k<m$) who are given free items. This significantly loosens the valuation requirement and allows us to show that the attack works even if $k<m$ of the parties have large valuations. The set $C$ may also be used to capture those agents who are oblivious to the attack, thus modeling the (very realistic) scenario where some of the agents are not aware of the attack and cannot respond accordingly. We have not explicitly accounted for this; doing so would give a slightly stronger bound in the following but would not fundamentally change the analysis. We now describe the attack in more detail.

\begin{thm}\label{thm:eip_not_resilient}
    Consider $m$ identical items, and let $\varepsilon\ll v_1 < v_2 < \cdots < v_n$ be the valuations of the $n$ buyers, with $n=(1+\alpha)\,m$ for some $\alpha>0$. If for some $k<m$ it holds that, 
    \begin{equation}\label{eq:valuation_bound}
        \frac{v_{n-k+1}-B}{v_{n-m}} < \frac{n-k}{n-m},
    \end{equation}
    then EIP-1559 is not Stackelberg attack resilient.
\end{thm}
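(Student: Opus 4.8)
The plan is to prove non-resilience constructively: for the $k$ witnessing~\eqref{eq:valuation_bound}, I would exhibit one explicit contract whose induced metagame admits an equilibrium with a payoff vector different from that of the plain auction. Since Stackelberg resilience demands equality of the equilibrium payoffs, a single such discrepancy certifies the theorem. I partition the agents relative to $k$ into three groups: a \emph{leader} (say the top type, agent $n$), a \emph{coalition} $C$ taken to be the $k$ highest types $\{n-k+1,\dots,n\}$ (those it is cheap to hand free items, and which also absorbs any agents oblivious to the attack), and the remaining $n-k$ agents, who are routed into a lottery for the leftover slots.

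First I would pin down the contract and the two continuations it induces. On path, the leader and every member of $C$ submit a tip of $2\varepsilon$, securing the $k$ top-priority slots (so the leader's item costs only $O(\varepsilon)$), while the remaining $n-k$ agents all submit the minimal tip $\varepsilon$; the miner then fills the $m-k$ leftover slots uniformly among these $\varepsilon$-bidders, so each wins with probability $q=\tfrac{m-k}{n-k}$ and, conditioned on winning, nets $v_i-B$ up to $O(\varepsilon)$. Off path, the contract prescribes that any deviation triggers the leader to resubmit its ordinary first-price bid (or slightly above), collapsing the continuation into a first-price auction whose marginal tip is driven to $v_{n-m}$; a winner of type $v_i$ then nets $v_i-B-v_{n-m}$ and a loser nets $0$. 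The leader is willing to carry this out, since even then it nets $v_n-B-v_{n-m}>0$, and the commitment makes the threat binding.

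The heart of the argument is a single round of incentive comparisons. A lottery participant of type $v_i$ weighs the cooperative payoff $q\,(v_i-B)$ against the deviation payoff $\max(0,\,v_i-B-v_{n-m})$; since this gap is monotone decreasing in $v_i$ in the regime where deviation is profitable (because $q<1$), it suffices to certify the inequality at a conservative upper bound on the lottery types, namely the coalition-boundary type $v_{n-k+1}$. Substituting $q=\tfrac{m-k}{n-k}$ and rearranging $q\,(v_{n-k+1}-B)\ge v_{n-k+1}-B-v_{n-m}$ yields precisely condition~\eqref{eq:valuation_bound}; and since all genuine lottery types lie strictly below $v_{n-k+1}$, they strictly prefer cooperation (with low types below rank $n-m$ trivially preferring it, their punishment payoff being $0$). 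For the members of $C$ and the leader, a guaranteed item nets $v_i-B$, which weakly dominates their first-price payoff $v_i-B-v_{n-m}$, so none has a profitable deviation. Collecting these checks shows the prescribed profile is an equilibrium of the augmented game.

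It then remains to read off the payoff divergence: the leader captures its item essentially for free instead of paying near $v_{n-m}$, every buyer is weakly better off, yet the auctioneer's revenue collapses from $\Theta(m\,v_{n-m})$ to $O(m\varepsilon)$ — a clear departure from the plain-auction equilibrium, certifying non-resilience. I expect the main obstacle to lie not in rearranging~\eqref{eq:valuation_bound} but in justifying that the constructed profile is a genuine equilibrium of the universal-commitment metagame of~\cite{smart_contracts}: one must argue the threatened first-price reversion is credible as a subgame-perfect move in the expanded contract tree, rule out profitable unilateral or joint deviations by $C$ and by the leader, and keep the base-fee and $\varepsilon$-bookkeeping consistent between the on-path lottery and the off-path punishment so that $v_{n-k+1}$ is correctly the binding boundary type.
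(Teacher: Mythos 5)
Your construction follows the same route as the paper's proof of \cref{thm:eip_not_resilient}: a leader commits to a near-zero bid conditioned on a coalition $C$ of $k$ high types receiving guaranteed slots and the remaining agents entering an $\varepsilon$-lottery for the $m-k$ leftover slots, with a threatened reversion to first-price bidding off path; the incentive comparison $\frac{m-k}{n-k}(v_j-B) > v_j - B - v_{n-m}$, certified at the boundary type $v_{n-k+1}$, rearranges to exactly \eqref{eq:valuation_bound}, just as in the paper. However, there is one concrete gap in your punishment specification. You write that ``any deviation triggers the \emph{leader} to resubmit its ordinary first-price bid,'' and then assert this collapses the continuation into a first-price auction with marginal tip $v_{n-m}$. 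It does not: if only the leader reverts to $v_{n-m}+\varepsilon$ while the other cooperators remain committed to bids of $\varepsilon$ or $2\varepsilon$, then at most $k$ of the $m$ slots are taken by non-trivial bids, so a deviator can secure one of the remaining $m-k\geq 1$ slots by tipping $3\varepsilon$, earning roughly $v_j-B-3\varepsilon$. That beats the lottery for every lottery participant, so the prescribed profile is not an equilibrium and your key inequality is being compared against the wrong deviation payoff. The paper's contract $A^C_u$ (\cref{attack}) avoids this by having \emph{every} contracted agent whose valuation exceeds $v_{n-m}$ revert to bidding $v_{n-m}+\varepsilon$ in any subgame where some agent fails to deploy the contract; only then does a deviator genuinely face $m$ competitors at $v_{n-m}+\varepsilon$ and receive the payoff $v_j-v_{n-m}-B-O(\varepsilon)$ that your comparison requires. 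The fix is mechanical, but as written the step fails.

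A secondary remark: the obstacle you flag at the end --- justifying the threatened reversion as ``credible as a subgame-perfect move'' --- is not actually an issue in this model. The point of the universal-commitment framework of \cite{smart_contracts} is that a deployed contract is binding by construction, so the punishment need not be sequentially rational for the punishers. One only needs to check that deploying the contract is a best response for the leader (it is: $v_{i_1}-2\varepsilon$ versus the first-price payoff $v_{i_1}-v_{n-m}-B-\varepsilon$) and that each remaining agent's best response to the already-deployed contracts is to comply, which is precisely the inequality above; this is all the paper does.
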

\begin{proof}
    Assume that each agent has exactly one transaction, and let agent $i$ be the agent associated with the transaction of valuation $v_i$. Suppose the contract agents are ordered $i_1, i_2, \ldots, i_n$, where $i_1$ is the leading contract agent. Now consider the following contract $A^C_u$, parameterized by an integer $u\in[n]$ that represents the index of the contract order and a set $C \subseteq [n]$ with $i_1 \in C$ and $|C|=k$ for some $k\leq m$.
    \begin{contract}{\hspace{-1mm}\normalfont($A^C_{u}$).}\hfill\label{attack}
        \begin{enumerate}
            \item[1.] If $u=n$, play $\varepsilon$.
        \item[2.] If $u<n$, play $v_{n-m}+\varepsilon$ if $v_{i_u} > v_{n-m}+\varepsilon$ and 0 otherwise in every subgame where any agent $i_{v}$ with $v>u$ does not play the contract $A^C_{v}$; otherwise play $2\varepsilon$ if $u\in C$, and $\varepsilon$ if $u\not\in C$.
        \end{enumerate}
    \end{contract}
   Now suppose the leading contract agent deploys the contract $A^C_{1}$ with $|C|=k<m$ and $i_1 \in C$. If they are successful, their transaction will be added with certainty for a cost of $2\varepsilon$, thus gaining $v_{i_1}-2\varepsilon$. Consider the strategy of agent $j$ when every other agent plays \cref{attack}. If $j \in C$, then clearly for small $\varepsilon$, agent $j$ will comply with the threat. If instead $j \not\in C$, they will play \cref{attack} to obtain a value of $v_{j}-\varepsilon$ with probability $\frac{m-k}{n-k}$. If they do not play \cref{attack}, by design, all agents revert to a first-price auction. Then agent $j$ can either tip too little to win or tip at least $v_{n-m}+\varepsilon$ to have their transaction included. If $j\leq n-m$, this exceeds their valuation, hence they prefer \cref{attack}, as its expected payoff is $\frac{(m-k)(v_i - B - \varepsilon)}{n-k} > 0$. If instead $j > n-m$, they can choose not to comply with the threat to gain $v_{j} - v_{n-m} - B - 2\varepsilon$ utility. It follows that such an agent will comply with the threat if $v_j - v_{n-m} - B - \varepsilon > \frac{(m-k)(v_{j} - B - \varepsilon)}{n-k}$, which when ignoring $\varepsilon$'s, solves to $\frac{v_{j} - B}{v_{n-m}} < \frac{n-k}{n-m}$. But this is guaranteed to hold by \cref{eq:valuation_bound}, since $v_{j}\leq v_{n-k+1}$ for any $j$. Thus, complying with the threat is an equilibrium and hence EIP-1559 is not Stackelberg resilient. 
\end{proof}

Note that by letting $B=0$ we obtain a regular first-price auction, and hence \cref{thm:eip_not_resilient} implies that the transaction fee mechanisms of Ethereum, Bitcoin, and most other blockchains are not Stackelberg resilient, regardless of whether there is a base fee or not. We observe that the attack works also for second-price auctions.

\begin{thm}
    Consider a second-price auction again with $m$ identical items, and $n$ buyers, in keeping with \cref{thm:eip_not_resilient}.  If \cref{eq:valuation_bound} holds then the auction is not Stackelberg attack resilient.
\end{thm}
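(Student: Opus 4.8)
The plan is to adapt the proof of \cref{thm:eip_not_resilient} by observing that the second-price auction differs from the first-price auction only in the payment rule for the \emph{threat} branch of the contract, and that this difference turns out to be irrelevant under the same hypothesis. First I would re-deploy essentially the same contract \cref{attack}: the leading agent commits to bidding $2\varepsilon$ (or $\varepsilon$) if everyone complies, and otherwise reverts to the honest second-price bid. The crucial structural point is that in the cooperative branch of the attack nothing changes at all---the winning set $T$ is determined by the tips, and with all non-coalition agents tipping $\varepsilon$ the outcome is still a uniform lottery awarding the item for free (up to $\varepsilon$) to $m-k$ of the $n-k$ non-coalition agents. So the cooperative payoff of an agent $j \notin C$ is again $\frac{m-k}{n-k}(v_j - B - \varepsilon)$, exactly as before.

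The only quantity I need to recompute is the deviation payoff. In a truthful second-price auction each agent bids their valuation, and a winner pays the $(n-m)^{\text{th}}$-highest bid, i.e. $v_{n-m}$, rather than their own bid. Hence an agent $j > n-m$ who defects and reverts everyone to the second-price auction wins an item and pays $v_{n-m}$ (plus the base fee $B$), gaining $v_j - v_{n-m} - B$ rather than the first-price quantity $v_j - v_{n-m} - B - \varepsilon$. Up to the negligible $\varepsilon$ terms these coincide, so the comparison inequality governing compliance, $v_j - v_{n-m} - B > \frac{m-k}{n-k}(v_j - B)$, is \emph{identical} to the one in \cref{thm:eip_not_resilient}, and again rearranges to $\frac{v_j - B}{v_{n-m}} < \frac{n-k}{n-m}$. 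For agents $j \le n-m$ the defection is even less attractive, since winning in the second-price auction would still require paying $v_{n-m} > v_j$, yielding nonpositive utility, so they strictly prefer the lottery. Thus \cref{eq:valuation_bound} again guarantees that every non-coalition agent complies.

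I would then conclude, exactly as before, that universal compliance with the threat is an equilibrium, so the second-price auction is not Stackelberg attack resilient whenever \cref{eq:valuation_bound} holds. The main thing to be careful about---and the only place where a genuine argument beyond ``copy the previous proof'' is needed---is verifying that reverting to the honest profile really does implement a truthful second-price auction off the cooperative path, i.e. that the contract's fallback bids constitute the standard truthful equilibrium and that no agent can do better by some intermediate deviation (bidding between $v_{n-m}$ and their valuation) once the threat is triggered. Since truthful bidding is a dominant strategy in the Vickrey auction, this reduces to the standard incentive-compatibility argument for second-price auctions, and no agent gains from a partial deviation; the remainder is the same $\varepsilon$-level bookkeeping as in \cref{thm:eip_not_resilient}. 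I therefore expect the proof to be short, consisting of a pointer to the earlier contract together with the observation that the payment rule leaves the governing inequality unchanged up to $\varepsilon$.
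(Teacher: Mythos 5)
Your proposal matches the paper's own (sketched) argument: reuse \cref{attack}, observe that the cooperative lottery branch is unchanged, and note that the defection payoff in the second-price auction differs from the first-price one only by the $\varepsilon$ already discarded in the proof of \cref{thm:eip_not_resilient}, so the same compliance inequality and conclusion follow. Your extra remark about verifying the fallback implements the truthful Vickrey equilibrium is a reasonable bit of added care but does not change the route.
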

\begin{proof}[Proof (Sketch)]
    We consider the same attack, \cref{attack}. As we have seen, in the EIP-1559 setting, which is a first price auction when $B=0$, bidders have perfect information and must only bid just enough to outbid the $(n-m)^\text{th}$ highest one out with a bid of $v_{n-m}+\varepsilon$ and will be charged that same amount. In the second price auction, they can bid their valuation or stick with $v_{n-m}+\varepsilon$.  In any case, if they are included, the agent will pay $v_{n-m}$, a slight discount on the $v_{n-m}+\varepsilon$ cost in the first price setting. Thus \cref{attack} can be used and the scenario in which the attack works will look the same. If the attack does not work and agents revert to the equilibrium as it would be without contracts, but this time with the slightly different cost.  Note that in the proof of \cref{thm:eip_not_resilient} we drop the epsilons that constitute the difference between the first and second price auctions.  So by the proof of \cref{thm:eip_not_resilient}, a the second price auction as described is also not Stackelberg resilient.    
\end{proof}

\paragraph{Risk Aversion.} It is natural to wonder if the attack will still work if agents are risk averse.  To model risk aversion, agents have some concave utility function $u=U(\cdot)$.  If, for example, an agent gets a slot for free at valuation $v_i$, their utility would be defined to be $u=U(v_i)$.  For $U(\cdot)$ to be concave, we must have $U((1-p)x + p y) \geq (1-p)U(x)+p U(y)$ where $(x,U(x))$ and $(y,U(y))$ are two points on the utility function and $p \in [0,1]$.  Graphically, this implies that any point on the line between $(x,U(x))$ and $(y,U(y))$ is on or below the utility function. This is the line tracing out the utility of the function of a coin toss with probability $p$ between $U(x)$ and $U(y)$. This models risk aversion because the utility of any outcome based on a coin toss between two outcomes will be on or below the curve, which represents the utility of outcomes that are certain. If we make the assumption that $x=U(x)=0$ and set $y=v_i$, we have $U(p v_i)\geq p \,U(v_i)$. Note that in the proof of \cref{eq:valuation_bound}, we required the condition, here simplified, that $v_i-v_{n-m-k+1}>p v_i$.  If we instead had some concave utility function, this would be $U(v_i-v_{n-m})>p U(v_i)$.  Given that $U(p v_i)\geq p U(v_i)$, the condition found in \cref{eq:valuation_bound} is necessary, but not necessarily sufficient, for the contact attack to be viable. Finding the exact condition requires $U(\cdot)$ to be known.

\section{Everyone Benefits Except for the Auctioneer}
\label{sec:econ}
In the following, we will assume for the sake of argument that $k=1$ and that $\varepsilon=0$. As $k$ increases, more agents with high valuations get free entry when $\varepsilon=0$. Thus their relatively high valuations are counted into social welfare.  As long as this elite group is relatively small, this will have little impact on the chances of the lottery players, meaning the allowance of a relatively small $k>1$ will increase social welfare. 

We define the \emph{price of defiance}, a ratio of sets of equilibrium related to the price of anarchy \cite{poa}. Let $S$ be the set of all strategy profiles in the game, and take two sets $C \subseteq S$, some set of strategies, and $E \subseteq S$, the set of equilibria of the game. We take the set $C$ to be the set of equilibria after a successful contract attack has been deployed. Define,
\begin{equation}\label{eq:pod}
    PoD = \frac{\max_{s \in C}\text{Welf}(s)}{\min_{s \in E} \text{Welf}(s)}.
\end{equation}
We look at the ratio between the best of a subset of possible outcomes and that same worst equilibrium. This differs from the price of anarchy in that we want to compare some subset of strategies, here those that become equilibria due to the introduction of a contract attack, rather the optimal solution, to the game's usual equilibria. We have  $PoD \leq PoA$. 

Our set $C$ is the set of equilibrium arising from agents having and complying with Contract 4.1. There are up to $n$ equilibria in the set, one for each choice of agent with leading contract. To analyze the price of defiance we will need concentration bounds on the valuations of the parties. Order the players with valuations $v_1 < v_2 < \ldots < v_n$, then $v_i \sim \text{Beta}(i,n+1-i)$. Say a function $f$ is \emph{negligible} if $f(x) = o(x^{c})$ for every constant $c \in \mathbb{R}$, i.e. if it grows slower than the inverse of any polynomial. We will make use of the following concentration bound on order statistics from the uniform distribution.
\begin{lemma}[Skorski, \cite{beta_bounds}]\label{lemma:beta_bound}
    Let $X \sim \text{Beta}(\alpha,\beta)$ for $\alpha,\beta>0$, and define,
    \begin{align*}
        v^2 = \frac{\alpha\beta}{(\alpha+\beta)^2(\alpha+\beta+2)}, && 
        c_0 = \frac{|\beta-\alpha|}{(\alpha+\beta)(\alpha+\beta+2)}.
    \end{align*}
    Then for any $\varepsilon>0$, it holds that,
    \begin{align*}
        \Pr\left[\left \lvert X-\E[X] \right \rvert > \varepsilon\right] &\leq 2\exp\left(-\frac{\varepsilon^2}{2v^2+2 \varepsilon \max\left\{v, c_0 \right\}}\right).
    \end{align*}
\end{lemma}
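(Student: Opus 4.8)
The plan is to treat this as a Bernstein-type two-sided tail bound and prove it by a Chernoff argument on the centered variable $Y = X - \E[X]$, optimizing the exponential tilt and then union-bounding the upper and lower tails (which is what produces the leading factor of $2$). A useful first observation is that the claimed right-hand side is vacuous (it is $\ge 1$) until $\varepsilon$ exceeds a fixed multiple of the standard deviation $\sigma = \sqrt{\mathrm{Var}(X)}$; only in the large-deviation regime is there anything to prove. This is what makes the statement consistent despite $v^2$ being \emph{strictly smaller} than the true variance $\mathrm{Var}(X) = \frac{\alpha\beta}{(\alpha+\beta)^2(\alpha+\beta+1)}$: the scale term $\max\{v,c_0\}\,\varepsilon$ absorbs the second-order deficit, and the tilts $t$ relevant to this regime stay bounded away from $0$.

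To get a usable handle on the moment generating function $M(t) = \E[e^{tX}]$, I would exploit that the Beta density lies in the Pearson family, so that $M(t)$ equals Kummer's confluent hypergeometric function ${}_1F_1(\alpha;\alpha+\beta;t)$ and satisfies Kummer's equation
\[
  t\,M''(t) + (\alpha+\beta - t)\,M'(t) - \alpha\,M(t) = 0.
\]
Writing $K = \log M$ (so $M'/M = K'$ and $M''/M = K'' + (K')^2$) converts this into a Riccati-type first-order ODE, $t\big(K'' + (K')^2\big) + (\alpha+\beta - t)K' - \alpha = 0$. Differentiating repeatedly at $t=0$ yields an explicit recursion for the cumulants $\kappa_j = K^{(j)}(0)$; it immediately recovers $\kappa_1 = \frac{\alpha}{\alpha+\beta}$ and $\kappa_2 = \mathrm{Var}(X)$, and --- since cumulants of order $\ge 2$ are invariant under centering --- it is the engine for bounding $|\kappa_j|$ for all $j$.

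With the recursion in hand, I would prove by induction a Bernstein-type cumulant estimate $|\kappa_j| \le \tfrac{j!}{2}\,v^2\,c^{\,j-2}$ for $j \ge 3$, with $c = \max\{v,c_0\}$, the $j=2$ term being treated separately. Here the two scales play genuinely different roles: $v$ governs the symmetric, variance-driven growth, while the skewness parameter $c_0 = \frac{|\beta-\alpha|}{(\alpha+\beta)(\alpha+\beta+2)}$ --- which first surfaces in $\kappa_3 = \frac{2\alpha\beta(\beta-\alpha)}{(\alpha+\beta)^3(\alpha+\beta+1)(\alpha+\beta+2)}$ --- governs the asymmetric contribution, and taking the maximum dominates both at once. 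Summing the Taylor series of $K$ over the admissible range of $t$ then yields a sub-gamma estimate for the centered cumulant generating function, which fed into $\Pr[Y > \varepsilon] \le \exp(-t\varepsilon + \log\E[e^{tY}])$ and optimized over $t$ gives the one-sided bound; the mirror argument for $\Pr[Y < -\varepsilon]$ and the union bound supply the factor $2$.

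The main obstacle is precisely the interplay between $v^2 < \mathrm{Var}(X)$ and the sharp constants. Because the variance strictly exceeds $v^2$, a naive global sub-gamma bound $\log\E[e^{tY}] \le \frac{v^2 t^2}{2(1-ct)}$ \emph{fails} near $t=0$, so the Chernoff optimization must be confined to tilts bounded away from the origin --- legitimate only because the target bound is vacuous for small deviations --- or else the vanilla sub-gamma lemma must be replaced by a finer Bennett-type argument matched to the Beta cumulants. Extracting clean $j!$-growth from the Kummer recursion uniformly in $\alpha,\beta$, and verifying the base cases that pin the constants to exactly $v^2$ and $c_0$ rather than order-of-magnitude surrogates, is where essentially all the work sits.
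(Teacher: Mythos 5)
First, a point of reference: the paper offers no proof of this statement. It is imported verbatim from Skorski \cite{beta_bounds} and used as a black box in the proof of \cref{lemma:bound}, so there is no in-paper argument to compare yours against. Your proposed architecture --- the Kummer/hypergeometric ODE for the moment generating function, a resulting recursion for the cumulants or central moments, an inductive Bernstein-type growth estimate, and a sub-gamma Chernoff bound with a union bound supplying the factor $2$ --- is the standard route to results of this shape, and you correctly isolate the one genuinely delicate feature of the statement, namely that $v^2$ is strictly smaller than $\mathrm{Var}(X)=\frac{\alpha\beta}{(\alpha+\beta)^2(\alpha+\beta+1)}$.

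As a proof, however, the proposal has two concrete gaps. (i) The induction hypothesis $|\kappa_j|\le\frac{j!}{2}\,v^2c^{\,j-2}$ with $c=\max\{v,c_0\}$ already fails at its base case $j=3$ for some admissible parameters: since $\kappa_3=m_3$, one computes $|\kappa_3|=\frac{2(\alpha+\beta+2)}{\alpha+\beta+1}\,v^2c_0$, and the prefactor tends to $4$ as $\alpha+\beta\to0$; whenever $\alpha+\beta<1$ and $c_0>v$ (e.g.\ $\alpha=10^{-3}$, $\beta=10^{-1}$) this exceeds the claimed $\frac{3!}{2}v^2c=3v^2c_0$. So the constants in the cumulant estimate must be re-derived, and with them the final constants of the lemma. (ii) You propose to rescue the sub-gamma bound $\log\E[e^{tY}]\le\frac{v^2t^2}{2(1-ct)}$ --- which, as you note, is false near $t=0$ because the variance deficit $\mathrm{Var}(X)-v^2=v^2/(\alpha+\beta+1)$ is positive --- by restricting to tilts bounded away from the origin, on the grounds that the target bound is vacuous for small deviations. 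But the target ceases to be vacuous already at $\varepsilon\approx\sqrt{2\ln 2}\,v$, where the optimal tilt is only of order $\varepsilon/(v^2+c\varepsilon)$, and you never verify that in this borderline regime the surplus term $2\varepsilon\max\{v,c_0\}\ge2\varepsilon v$ in the denominator actually absorbs the deficit uniformly in $\alpha,\beta>0$. Both issues sit exactly where you concede ``all the work sits''; but that work is the proof, and as written the argument establishes neither the moment growth condition nor the tail bound.
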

\begin{lemma}\label{lemma:bound}
    Let $X_1, X_2, \ldots, X_n \sim U[0,1]$, and let $X_{(1)} < X_{(2)} < \cdots < X_{(n)}$ be the $n$ order statistics. Then, 
    \begin{align*}
        \left \lvert X_{(i)} - \frac{i}{n+1} \right \rvert = \Tilde{O}(1/n), && \text{for every $i=1\ldots n$},
    \end{align*}
    except with negligible probability in $n$.
\end{lemma}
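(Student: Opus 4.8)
The plan is to apply the concentration inequality of \cref{lemma:beta_bound} to each order statistic $X_{(i)}$ individually and then take a union bound over the $n$ indices, committing throughout to the target scale $\Tilde{O}(1/n)$. The starting point is the distributional identity recorded just before the lemma, $X_{(i)} \sim \text{Beta}(i,\,n+1-i)$. Writing $\alpha = i$ and $\beta = n+1-i$ we have $\alpha+\beta = n+1$, so $\E[X_{(i)}] = \alpha/(\alpha+\beta) = i/(n+1)$, which is precisely the center appearing in the statement. Hence it suffices to prove that $X_{(i)}$ deviates from its mean by more than $\Tilde{O}(1/n)$ only with negligible probability, for each fixed $i$.

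Next I would instantiate the two parameters of \cref{lemma:beta_bound} at $\alpha = i$, $\beta = n+1-i$, which is a routine substitution:
\begin{align*}
    v^2 = \frac{i(n+1-i)}{(n+1)^2(n+3)}, && c_0 = \frac{|n+1-2i|}{(n+1)(n+3)}.
\end{align*}
The term $c_0$ is immediately harmless: since $|n+1-2i| \le n+1$, we have $c_0 \le 1/(n+3) = O(1/n)$ uniformly in $i$, so $c_0$ never exceeds the target scale. With these in hand I would fix a deviation of the target order, say $\varepsilon = (\log n)^2/n$, and substitute into the exponent $\varepsilon^2 / \big(2v^2 + 2\varepsilon\max\{v, c_0\}\big)$ supplied by the lemma. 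The aim is to show this exponent is $\omega(\log n)$, which makes each per-index tail probability negligible; since a sum of $n$ negligible terms is still negligible, a union bound over $i = 1,\dots,n$ then delivers the bound for every index simultaneously.

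The crux — the step that governs whether the $\Tilde{O}(1/n)$ rate survives — is the control of the variance parameter $v = \sqrt{i(n+1-i)}\,/\,\big((n+1)\sqrt{n+3}\big)$. At the extreme indices, where $\min\{i,\,n+1-i\}$ is at most polylogarithmic, the numerator is $\Tilde{O}(\sqrt{n})$ and hence $v = \Tilde{O}(1/n)$; there the exponent behaves like $\varepsilon^2/(2v^2) = \omega(\log n)$, comfortably super-logarithmic, and the argument goes through cleanly. The difficulty concentrates in the central band of indices with $i \asymp n$, where the numerator is $\Theta(n)$ and $v = \Theta(1/\sqrt{n})$ attains its maximum. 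This is exactly where forcing the deviation down to the $\Tilde{O}(1/n)$ scale is most demanding, since the $2v^2$ summand then dominates the denominator of the exponent. The main work of the proof is therefore to argue that, even across this central band, the chosen $\varepsilon$ together with the precise constants in \cref{lemma:beta_bound} keep the exponent growing faster than $\log n$; establishing this for the middle indices is the single obstacle on which the whole statement turns.
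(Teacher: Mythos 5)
Your setup is the same as the paper's: instantiate Skorski's Beta concentration bound (\cref{lemma:beta_bound}) at $\alpha=i$, $\beta=n+1-i$, pick a deviation $\varepsilon=\log^2 n/n$, and union bound over the $n$ indices. You also correctly locate where the argument is strained, namely the central indices $i\asymp n/2$, where $v=\Theta(1/\sqrt{n})$. But you then leave exactly that step as ``the main work of the proof'' without carrying it out, and it cannot be carried out: for such $i$ we have $c_0=O(1/n)$ and $2v^2=\Theta(1/n)$, so the exponent is $\varepsilon^2/\bigl(2v^2+2\varepsilon\max\{v,c_0\}\bigr)=\Theta(\varepsilon^2 n)=\Theta(\log^4 n/n)\rightarrow 0$, and \cref{lemma:beta_bound} returns a tail bound of essentially $2$ rather than something negligible. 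This is not a deficiency of the particular inequality: $X_{(\lceil n/2\rceil)}$ has standard deviation $\Theta(n^{-1/2})$ and is asymptotically normal around $1/2$, so $\Pr\bigl[|X_{(\lceil n/2\rceil)}-\tfrac{\lceil n/2\rceil}{n+1}|>\log^2 n/n\bigr]\rightarrow 1$. The claimed $\Tilde{O}(1/n)$ rate is therefore false for central indices, and no completion of your outline can rescue it; the best uniform rate available by this (or any) route is $\Tilde{O}(1/\sqrt{n})$.

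It is worth noting that the paper's own proof fails at precisely the point you flagged: it passes from $2\exp\bigl(-\delta^2\,2n/(1+4\delta)\bigr)$ to $2\exp(-\Omega(\delta n))$, a simplification that is valid only when $\delta$ is bounded away from $0$; for $\delta=\log^2 n/(n+1)$ the exponent is actually $\Theta(\delta^2 n)=\Theta(\log^4 n/n)=o(1)$, so the subsequent claim of a $2\exp(-\omega(\log n))$ bound does not follow. So you have correctly diagnosed the fatal step rather than resolved it. A corrected lemma with deviation $\Tilde{O}(1/\sqrt{n})$ uniformly in $i$ (or, index by index, $\Tilde{O}\bigl(\sqrt{i(n+1-i)}/n^{3/2}+1/n\bigr)$) does follow from your per-index computation of $v$ and $c_0$ with $\varepsilon=\Theta(v\log^2 n+\log^2 n/n)$, and one can check that this weaker rate still suffices for the downstream price-of-defiance bound, since there the error enters as $m(n+1)\delta=\Tilde{O}(m\sqrt{n})=o(nm)$.
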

\begin{proof}
    We make use of \cref{lemma:beta_bound} to bound the error term and must therefore first find the relevant values of $v$ and $c_0$.  It is a fact that such order statistics have the distribution $\text{Beta}(i, n+1-i)$, i.e. $\alpha = i$ and $\beta = n+1-i$. Thus, for all values of $i$, we must have $\alpha + \beta = n+1$.  It is easy to see that we find the largest value $v^2$ from \cref{lemma:beta_bound} when $\alpha = \beta = \frac{n+1}{2}$.  This case yields 
\begin{align*}
    v^2 \leq \frac{\frac{n+1}{2}\frac{n+1}{2}}{\left(\frac{n+1}{2}+\frac{n+1}{2}\right)^2\left(\frac{n+1}{2}+\frac{n+1}{2}+2\right)}=\frac{1}{4(n+3)}.
\end{align*}
The value of $c_0$ is largest when the numerator is largest, which is clearly when $|\beta - \alpha| = n-1$. Note that this is a specifically different case from when $v^2$ is largest.  When we go on to find the error bounds on specific $v_i$'s we will refine the bound at this step. Thus, we have the following bounding value,
\begin{align*}
    c_0 \leq \frac{n-1}{(n+1)(n+3)}.
\end{align*}
It is easy to see that $c = \max \{v, c_0 \}=c_0$.
Thus we can write down the bound for any $i$,
\begin{align*}
     \Pr\left[\left\lvert X_{(i)} - \E[X_{(i)}]\right\rvert > \delta\right] 
        &< 2\exp\left(-\frac{\delta^2}{2v^2+2c\delta}\right)\\
        &\leq 2\exp\left(-\frac{\delta^2}{2\frac{1}{4(n+3)}+\frac{2\delta(n-1)}{(n+1)(n+3)}}\right)\\
        &= 2\exp\left(-\frac{\delta^22(n+3)(n+1)}{(n+1)+4\delta(n-1)}\right)\\
        &< 2\exp\left(-\frac{\delta^2 2n^2}{(n+1)+4\delta n}\right)\\
        &\approx 2\exp\left(-\frac{\delta^2 2n}{1+4\delta }\right)\\
        &=2 \exp (-\Omega (\delta n)).
\end{align*}
If we take $\delta=\frac{\log^2{n}}{n+1}=\Tilde{O}(1/n)$, we obtain the bound, 
\begin{align}
     \Pr\left[\left\lvert X_{(i)} - \E[X_{(i)}]\right\rvert > \delta\right] < 2 \exp(-\omega(\log n)),
\end{align}
which is negligible in $n$. We conclude by doing a union bound on all $n$ valuations.
\end{proof}

\begin{thm}
    For uniformly distributed valuations, the price of defiance is at least $1+\alpha-o(1)$ except with probability negligible in $n$.
\end{thm}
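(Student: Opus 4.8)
The plan is to evaluate both terms in the definition \eqref{eq:pod} of $PoD$ explicitly, reading $\text{Welf}(s)$ as the aggregate buyer surplus, and then to invoke \cref{lemma:bound} to replace each order statistic $v_i$ by its mean $\tfrac{i}{n+1}$ up to an additive $\tilde O(1/n)$ error that holds simultaneously for all $i$ outside a negligible event.

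First I would pin down the two welfare quantities under the standing assumptions $k=1$ and $\varepsilon=0$. In the post-attack set $C$, the leading agent $i_1$ obtains an item for free while each of the remaining $n-1$ agents wins one of the $m-1$ leftover items with probability $\tfrac{m-1}{n-1}$; since items cost nothing here, the surplus is
$$ v_{i_1} + \frac{m-1}{n-1}\sum_{j\neq i_1} v_j \;=\; v_{i_1}\,\frac{n-m}{n-1} + \frac{m-1}{n-1}\,V, $$
where $V=\sum_i v_i$. As this is increasing in $v_{i_1}$, the maximum over $C$ is attained at $i_1=n$. In the usual set $E$ (a first-price auction) the top $m$ agents win and each pays $v_{n-m}$, so the buyer surplus is $\sum_{i=n-m+1}^n v_i - m\,v_{n-m}$; since this is one element of $E$, we have $\min_{s\in E}\text{Welf}(s)$ bounded above by it, which is all the lower bound requires. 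The conceptual crux, which is what makes the ratio exceed one, is precisely this asymmetry: in $C$ the items are free (surplus equals the winners' full valuations), whereas in $E$ the auctioneer extracts $m\,v_{n-m}$ from the buyers.

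Next I would substitute $v_i = \tfrac{i}{n+1} + \tilde O(1/n)$ from \cref{lemma:bound} and evaluate the arithmetic series, obtaining $V = \tfrac n2 + \tilde O(1)$, $\sum_{i=n-m+1}^n v_i = \tfrac{m(2n-m+1)}{2(n+1)} + \tilde O(1)$, and $v_{n-m} = \tfrac{n-m}{n+1} + \tilde O(1/n)$. Plugging in $n=(1+\alpha)m$ and keeping only the $\Theta(n)$ leading term, these collapse to
$$ \max_{s\in C}\text{Welf}(s) = \frac{n}{2(1+\alpha)} + \tilde O(1), \qquad \min_{s\in E}\text{Welf}(s) \leq \frac{n}{2(1+\alpha)^2} + \tilde O(1), $$
where the leader contribution $v_n\tfrac{n-m}{n-1}$ is only $O(1)$ and does not affect the leading order (so the bound is in fact insensitive to which agent leads). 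Taking the ratio, the two $\Theta(n)$ main terms divide to give exactly $1+\alpha$ while the additive $\tilde O(1)$ errors become a multiplicative factor $1 - \tilde O(1/n)$, yielding $PoD \geq (1+\alpha)\bigl(1 - \tilde O(1/n)\bigr) = 1+\alpha - o(1)$ outside the negligible event furnished by the union bound in \cref{lemma:bound}.

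I expect the main obstacle to be the bookkeeping of the error terms rather than any single hard inequality. One must verify that summing $\Theta(n)$ deviations, each of size $\tilde O(1/n)$, leaves only a $\tilde O(1)$ additive error against a $\Theta(n)$ main term, so that the relative error is genuinely $o(1)$; the delicate point is that $\text{Welf}(E)$ arises as a difference of two $\Theta(n)$ quantities ($\sum_{i>n-m} v_i$ and $m v_{n-m}$), so the cancellation that produces its $\tfrac{m^2}{2n}=\tfrac{n}{2(1+\alpha)^2}$ leading term must be tracked carefully to ensure the surviving term is still $\Theta(n)$ and is not swamped by the $\tfrac{1}{n+1}$-versus-$\tfrac1n$ discrepancies or the $O(1)$ leader contribution.
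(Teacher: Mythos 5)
Your proposal is correct and follows essentially the same route as the paper's proof: identify the post-attack welfare (maximized when the highest-valuation agent leads) and the first-price equilibrium welfare $\sum_{i=n-m+1}^n v_i - m\,v_{n-m}$, replace each order statistic by its mean $\tfrac{i}{n+1}$ via \cref{lemma:bound}, and observe that the leading terms $\tfrac{n(m+1)}{2(n+1)}$ and $\tfrac{m(m+1)}{2(n+1)}$ divide to give $\tfrac{n}{m}=1+\alpha$ with only a $\tilde O(1)$ additive error surviving the cancellation in the denominator. The paper carries out the same bookkeeping with an explicit choice of $\delta=\tfrac{(m+1)\log^2 n}{2m(n+1)}$, but there is no substantive difference in the argument.
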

\begin{proof}
It is easy to see that the maximal choice $s \in  C$ is when the agent with the highest valuation has the contract. There is only one choice for equilibrium $s \in C$. Thus we have,
     \begin{align}
         PoD &= \frac{\max_{s \in C}\text{Welf}(s)}{\min_{s \in E} \text{Welf}(s)}
         = \frac{ \left( \sum_{j=1}^{n-1} \frac{m-1}{n-1}(v_j-\varepsilon) \right) +v_n -2\varepsilon }{\left( \sum_{i=n-m+1}^n v_i -v_{n-m} -\varepsilon \right)} \nonumber\\
         &\approx \frac{\frac{m-1}{n-1} \left( \sum_{j=1}^{n-1} v_j \right) +v_n }{\left( \sum_{i=n-m+1}^n v_i   \right)-mv_{n-m}}.\label{pod_approx}
     \end{align}
At this stage we have not yet used any assumptions on the distribution of the valuations. If the contract attack works, that is if the valuations are in keeping with the in condition from \cref{thm:eip_not_resilient}, we have $PoD >1$.  This can be seen mathematically by substituting the condition into the denominator of \cref{pod_approx} above.  Intuitively, given that the threat is just the usual first price auction when the contract holder is agent $n$, the other agents will acquiesce only if their utility would be higher in the lottery.  Thus total lottery welfare, the numerator, must be higher than the auction, the denominator, leading to a $PoD>1$ in the general case. Each $v_i$ is the $i^{th}$ order statistic of a uniformly distributed random variable, that is $v_i=X_{(i)}$ where $X_i$ is sampled i.i.d. from the uniform distribution on $[0,1]$. By linearity of expectation, we have that,
\begin{align*}
    \E \left[ \sum_{i=n-m+1}^n v_i \right] = \sum_{i=n-m+1}^n \frac{i}{n+1} 
    &= \frac{1}{n+1 }\left( \sum_{i=0}^n i - \sum_{k=0}^{n-m}k \right)\\
    &=\frac{n}{2}-\frac{(n-m)(n-m+1)}{2(n+1)}, \\\intertext{and that,}
    \E \left[\sum_{j=1}^{n-1}v_j \right]  = \sum_{j=1}^{n-1}\frac{j}{n+1} &= \frac{(n-1)n}{2(n+1)}.
\end{align*}
We proceed to lower bound $PoD$ using \cref{lemma:bound} to yield,
\begin{align*}
    PoD &\geq \frac{\frac{m-1}{n-1} \left( \frac{(n-1)n}{2(n+1)}-(n-1)\delta\right) +\frac{n}{n+1}-\delta}{\frac{n}{2}-\frac{(n-m)(n-m+1)}{2(n+1)}+m\delta -m\left( \frac{n-m}{n+1}+\delta \right)} \\ 
    &=\frac{n(m+1)-2m(n+1)\delta}{m(m+1)+4m (n+1)\delta}
\end{align*}
We now condition on the errors of the valuations being bounded by $\delta=\frac{(m+1) \log^2{n}}{2m(n+1)}$, which we know to happen except with negligible probability by \cref{lemma:bound}. Then we obtain the following bound,
\begin{align*}
    PoD &\geq \frac{n-\log^2(n)}{m+\log^2(n)} = 1+\alpha - o(1),
\end{align*}
as desired. 
\end{proof}
This arguably suggests that lotteries should be used for transaction mechanisms instead when the valuations can be believed to be of similar size. In the $n=3$, $m=2$ case, we have 
 \begin{align*}
     PoD = \frac{\frac{v_1}{2}+\frac{v_2}{2}+v_3-3\varepsilon}{v_2+v_3-2v_1}
 \end{align*}
If the condition for the contract attack working as discussed in the example in \cref{sec:attack} hold, that is, if $v_1+ \frac{1}{2}\varepsilon>\frac{1}{2}v_2$, the ratio becomes
\begin{align*}
    PoD > \frac{\frac{v_1}{2}+\frac{v_2}{2}+v_3-3\varepsilon}{v_3 +\varepsilon}
\end{align*}
which is clearly larger than one. 

It is important to note that while the attack benefits all the agents with transactions, it is detrimental to the auctioneer who lose essentially all of their revenue. Continuing with our $n=3$, $m=2$ example, we see the auctioneer will earn $2(v_1+\varepsilon)$, twice the twin winning bids from agents 2 and 3, in the auction case. If the contract attack is successfully executed, the auctioneer income will be $3\varepsilon$, $2\varepsilon$ from the leading contract holder, regardless of which agent this is, and $\varepsilon$ from the winner of the lottery.  Thus almost all the revenue is lost; the auctioneer will miss out on $2v_1-\varepsilon$ income. If there were a base fee and all agents had a valuation larger than said base fee, i.e. $v_1>B$, the first price revenue would be $2(v_1+\varepsilon-B)$. The lottery revenue will continue to be $3\varepsilon$ and the income lost to the attack will be $2(v_1-B)-\varepsilon$.

\section{The Attack Works for Natural Distributions}
\label{sec:distributions}
In this section, we show that the conditions required for the attack to work are satisfied with high probability under reasonable assumptions. For the sake of analysis, we will assume that $B=0$. The results obtained are qualitatively similar when the valuations are much larger than the base fee.

We continue with our illustration of the $n=3$, $m=2$ case, now assuming that the players have valuations that are uniformly distributed on $[0,1]$. As before, we have three valuations $v_1 < v_2 < v_3$ and we can now make use of the distribution.  The valuations in order are order statistics, that is $v_i = X_{(i)}$ where all $X_i$ are sampled i.i.d. from the uniform distribution on $[0,1]$. Using the well known fact that order statistics on the uniform distribution follow specific beta distributions, we get the following distributions and their expectations: $v_1 \sim \text{Beta}(1,3)$ 
 yielding, $\mathbb{E}[v_1]=\frac{1}{4}$; $v_2 \sim \text{Beta}(2,2)$, yielding $\mathbb{E}[v_2]=\frac{1}{2}$; and $v_3 \sim \text{Beta}(3,1)$, with
 $\mathbb{E}[v_3]=\frac{3}{4}$.
Note that the variance for all the distributions is $\text{Var}[v_i]\leq 1/20$ and we will not take it into account moving forward. In the first price auction scenario, we can see that if agents 2 and 3 bid just enough to outbid agent 1, i.e. $\frac{1}{4}+\varepsilon$, they will secure their slots as cheaply as possible. So in the first price auction the players will have the expected utilities
$\E[u_1] = 0$, $\E[u_2] = \frac{1}{4}-\varepsilon$, and $\E[u_3]  =\frac{1}{2}-\varepsilon.$

If agent 1 has the leading contract, they can threaten to outbid agent 2 with a bid of $\frac{1}{2}+\varepsilon$. If the threat were to be carried out, agent 2 would lose their slot and receive utility $0$ and agent 3, secure in the top spot, but now having to outbid agent 2, will receive $\frac{1}{4}-\varepsilon$.  If agents 2 and 3 comply with the threat, i.e. bid $\varepsilon$ and enter a lottery, they will have expected utilities $\frac{1}{4}-\frac{\varepsilon}{2}$ and $\frac{3}{8}-\frac{\varepsilon}{2}$, respectively. It is clear that these utilities are more desirable than ignoring the threat, and the attack can be executed. Agent 1 will enjoy expected utility $\frac{1}{4}-2\varepsilon$. Note that agents 1 and 2 have higher utility than they would have had in the first price auction, but agent 3 is hurt by the attack.

If agent 2 has the leading contract, their best attempt at a threat is outbidding agent 1 with a bid of $\frac{1}{4}+\varepsilon$. This is no threat at all as it simple coincides with their first price strategy. If instead agent 3 has the leading contract, we once again have a viable attack. Since agent 3 already outbids the others, their contract endowed strategy is more a proposition for mutual benefit than a greedy attack.  If the other two parties enter into a lottery at price $\varepsilon$ and agent 3 bids $2\varepsilon$ we have expected utilities $\E[u_1] = \frac{1}{8}-\frac{\varepsilon}{2}$,$\E[u_2] = \frac{1}{4}-\frac{\varepsilon}{2}$, and $\E[u_3]  =\frac{3}{4}-2\varepsilon.$ It can be easily seen that everyone benefits in this situation and the attack will work.
It is an easy calculation to find that $PoD\approx3/2$. Regardless of which agent has the leading contract, if the attack works, the total tip paid to the auctioneer will be $3\varepsilon$.  In the first price auction the expected auctioneer payout is $\frac{1}{2}+2\varepsilon$. The difference constitutes an almost complete loss of revenue. 
\begin{lemma}[Xu, Mei, Miao, \cite{ratio_order_stats}]\label{lemma:ratio_order_stats}
    Let $X_1, X_2, \ldots, X_n \sim U(0,1)$ be i.i.d. Let $i<j$ and define $R_{ij} = \frac{X_{(j)}}{X_{(i)}}$ and let $f(\cdot)$ be its density function with support $[1,\infty)$. Then for every $r\geq 1$,
    \begin{align*}
        f(r) &
        = \frac{n!(r-1)^{j-i-1}}{(i-1)!(j-i-1)!(n-j)!r^j} \,  \int_{0}^1 (1-u)^{j-1}\,u^{n-j} \, \mathrm{d}u.
    \end{align*}
\end{lemma}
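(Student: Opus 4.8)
The plan is to derive the density of $R_{ij} = X_{(j)}/X_{(i)}$ directly from the well-known joint density of the pair of order statistics $(X_{(i)}, X_{(j)})$ by a change of variables. First I would recall that for $i<j$ and i.i.d.\ uniform samples on $(0,1)$, the joint density of $(X_{(i)}, X_{(j)})$ is
$$g(x,y) = \frac{n!}{(i-1)!(j-i-1)!(n-j)!}\, x^{i-1}(y-x)^{j-i-1}(1-y)^{n-j}, \qquad 0<x<y<1,$$
which follows from the standard combinatorial argument: one assigns $i-1$ of the samples below $x$, one at $x$, $j-i-1$ strictly between $x$ and $y$, one at $y$, and $n-j$ above $y$, and the multinomial coefficient produces the prefactor. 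Since $j>i$ forces $X_{(j)}\geq X_{(i)}$, the ratio is automatically supported on $[1,\infty)$, matching the claimed support.

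Next I would apply the substitution $(x,r)\mapsto(x,y)$ with $y=rx$, whose Jacobian has absolute value $x$. Substituting $y=rx$ into $g$ turns the factor $(y-x)^{j-i-1}$ into $x^{j-i-1}(r-1)^{j-i-1}$, and after multiplying by the Jacobian the joint density of $(X_{(i)},R_{ij})$ becomes proportional to $(r-1)^{j-i-1}\,x^{j-1}(1-rx)^{n-j}$, with the constant $C=\tfrac{n!}{(i-1)!(j-i-1)!(n-j)!}$ carried along unchanged. To obtain the marginal density of $R_{ij}$ I would integrate out $x$; the constraints $0<x$ and $y=rx<1$ restrict $x$ to the interval $(0,1/r)$, which is the key place to get the limits right.

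The main computational step is then evaluating $\int_0^{1/r} x^{j-1}(1-rx)^{n-j}\,\mathrm{d}x$. Substituting $u=rx$ (so $x=u/r$ and $\mathrm{d}x=\mathrm{d}u/r$) rescales the interval to $(0,1)$ and extracts a factor $r^{-j}$, leaving $\int_0^1 u^{j-1}(1-u)^{n-j}\,\mathrm{d}u$. Collecting everything reproduces the prefactor $\tfrac{n!(r-1)^{j-i-1}}{(i-1)!(j-i-1)!(n-j)!\,r^j}$ exactly as stated, with the $r$-dependent upper limit $1/r$ being precisely what produces the $r^{-j}$ scaling rather than a constant.

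The only genuine subtlety is cosmetic: matching the exact form in the lemma, whose integral is written as $\int_0^1(1-u)^{j-1}u^{n-j}\,\mathrm{d}u$, whereas my substitution $u=rx$ yields $\int_0^1 u^{j-1}(1-u)^{n-j}\,\mathrm{d}u$. These two integrals are equal — both equal the Beta value $B(j,n-j+1)=\tfrac{(j-1)!(n-j)!}{n!}$ — so a final reflection $u\mapsto 1-u$ (or, equivalently, using the substitution $u=1-rx$ from the start) recovers the stated expression verbatim. I expect no real difficulty beyond careful bookkeeping of the exponents on $x$ after the change of variables and the integration limits; everything else is routine.
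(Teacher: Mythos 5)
Your derivation is correct: the joint order-statistic density, the change of variables $y=rx$ with Jacobian $x$, the integration range $(0,1/r)$, and the rescaling $u=rx$ all check out, and your observation that $\int_0^1 u^{j-1}(1-u)^{n-j}\,\mathrm{d}u=\int_0^1(1-u)^{j-1}u^{n-j}\,\mathrm{d}u=B(j,n-j+1)$ reconciles your expression with the one printed in the lemma. Note that the paper gives no proof of this statement at all --- it is imported verbatim from the cited reference --- so your argument is a valid self-contained substitute via the standard route.
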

\begin{thm}
    Suppose $n$ buyers participate in an auction of $m$ identical items where $n=(1+\alpha)\,m>m+1$. If the valuations of the items are sampled uniformly at random and, $0 \leq \alpha < 0.53$, then first-price auctions are not Stackelberg resilient, except with probability negligible in $m$.
\end{thm}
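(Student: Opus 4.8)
The plan is to reduce the statement to verifying that the hypothesis of \cref{thm:eip_not_resilient} holds with overwhelming probability, since that theorem already turns the hypothesis into non-resilience (and $B=0$ recovers the first-price case). With $B=0$ the hypothesis asks only for the \emph{existence} of some $k\in\{1,\dots,m-1\}$ with $\frac{v_{n-k+1}}{v_{n-m}}<\frac{n-k}{n-m}$. Writing $v_i=X_{(i)}$ for i.i.d.\ uniform draws, I would phrase the target event purely in terms of order statistics: with overwhelming probability at least one admissible $k$ makes the ratio $R_k=X_{(n-k+1)}/X_{(n-m)}$ fall below the threshold $t_k=\frac{n-k}{n-m}$.

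First I would pin down the typical behaviour. By \cref{lemma:bound}, every $v_i$ equals $\frac{i}{n+1}$ up to an additive $\tilde O(1/n)$ term except with negligible probability, so $R_k$ concentrates around $\frac{n-k+1}{n-m}$. The crux of the whole argument surfaces immediately: this concentration value exceeds the threshold $t_k=\frac{n-k}{n-m}$ by exactly $\frac{1}{n-m}=\Theta(1/m)$, which is of the same order as (indeed smaller than) the $\tilde O(1/n)$ slack of \cref{lemma:bound}. Thus the first-order (mean) terms cancel for \emph{every} $k$, and \cref{lemma:bound} on its own is far too lossy to decide the inequality; the success of the attack is a genuinely second-order, fluctuation-driven phenomenon rather than a margin that survives at the level of expectations.

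To resolve this I would discard the crude concentration bound in favour of the exact law of the ratio from \cref{lemma:ratio_order_stats}. Taking $i=n-m$ and $j=n-k+1$ and evaluating the Beta integral, the density collapses to $f(r)\propto (r-1)^{m-k}/r^{\,n-k+1}$ on $[1,\infty)$, equivalently $1/R_k\sim\mathrm{Beta}(n-m,\,m-k+1)$, so that $\Pr[R_k\ge t_k]$ is an explicit incomplete-Beta expression with $t_k=\frac{j-1}{i}$. I would then condition on $v_{n-m}$, whose own fluctuation shifts every threshold $t_k$ simultaneously and is favourable whenever it lands above its mean; conditionally, the top $m$ valuations are order statistics of $m$ i.i.d.\ uniforms on $[v_{n-m},1]$, so the event $R_k<t_k$ becomes a clean statement that $W_{(m-k)}<\frac{v_{n-m}}{1-v_{n-m}}\cdot\frac{m-k}{n-m}$ for the rescaled uniforms $W$. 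Writing $k=\gamma m$ and expanding the incomplete-Beta tail to the relevant order, I would optimise $\gamma$ and read off the decay rate of the failure probability as a function of $\alpha$, using the adaptivity in $k$ (a union across the admissible band, together with the common favourable shift from $v_{n-m}$) to amplify the per-index success into an overall bound. The cut-off $\alpha<0.53$ is precisely the range in which this analysis keeps an admissible $k<m$ with strictly negative deficit surviving the fluctuations; it is obtained numerically as the root of the resulting transcendental equation, after which a union bound over the $O(m)$ choices of $k$ and the $n$ concentration events gives the claimed negligible failure probability.

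The main obstacle is exactly this first-order cancellation between $\frac{n-k+1}{n-m}$ and $\frac{n-k}{n-m}$: it forces the entire argument to operate at the scale of the fluctuations rather than the means, so no single application of \cref{lemma:bound} can suffice, and one must combine the sharp ratio law of \cref{lemma:ratio_order_stats} with the freedom to choose $k$ adaptively and with the coupling induced by conditioning on $v_{n-m}$. Extracting a bound that is negligible rather than merely polynomially small, and identifying the sharp constant $0.53$, is where essentially all of the technical work concentrates.
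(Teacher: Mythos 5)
Your reduction to verifying \cref{eq:valuation_bound} and your appeal to \cref{lemma:ratio_order_stats} match the paper's strategy, and your identification $1/R_k\sim\mathrm{Beta}(n-m,\,m-k+1)$ is the correct reading of that lemma. The genuine gap is in the concluding step. You correctly observe that for \emph{every} admissible $k$ the ratio $R_k$ concentrates at $\frac{n-k+1}{n-m}$, i.e.\ \emph{above} the threshold $\frac{n-k}{n-m}$ by $\frac{1}{n-m}=\Theta(1/m)$, while its standard deviation is $\Theta(1/\sqrt{m})$; by the normal approximation to the Beta law this forces $\Pr\bigl[R_k<\frac{n-k}{n-m}\bigr]\to 1/2$ for each fixed $k$. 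Having established this, you cannot then ``optimise $\gamma$ and read off the decay rate'': there is no $k$ with a strictly negative deficit, so there is no large-deviation rate to extract, and your sentence claiming that $\alpha<0.53$ ``keeps an admissible $k<m$ with strictly negative deficit'' contradicts your own (correct) first paragraph. Adaptivity over $k$ does not rescue a \emph{negligible} bound either: after conditioning on $v_{n-m}$ as you propose, the event that the condition fails for all $k$ simultaneously is a one-sided boundary-crossing event for the uniform empirical process of the top $m$ values at distance $O(1/m)$ from the diagonal, and its probability is bounded below by an inverse polynomial in $m$ (and by a constant once the $\Theta(1/\sqrt{m})$ fluctuation of $v_{n-m}$ is taken into account). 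The plan as written therefore cannot terminate in the stated conclusion.

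For comparison, the paper's proof fixes $k=\delta m$ with $\delta=0.69$, substitutes into \cref{lemma:ratio_order_stats}, and bounds the single tail integral $\int_{(1+\alpha-\delta)/\alpha}^{\infty}f(r)\,\mathrm{d}r$ by an exponentially small quantity via the entropy bound on binomial coefficients; there is no union over $k$ and no conditioning on $v_{n-m}$. Your first-order-cancellation observation is in direct tension with that computation, and you should locate exactly where the two part ways: evaluating the Beta integral in \cref{lemma:ratio_order_stats} leaves a density proportional to $(r-1)^{j-i-1}/r^{j}$ with $j=n-k+1=(1+\alpha-\delta)m+1$, whereas the paper's displayed density carries $r^{n}$ in the denominator. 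Since $r>1$ on the integration range, inflating that exponent from $j$ to $n$ shrinks the tail by a factor exponential in $\delta m$, and this discrepancy is precisely what separates the paper's negligible bound from the $1/2+o(1)$ that the correctly substituted density yields. Reconciling these two computations --- rather than layering further fluctuation analysis on top --- is the step your write-up (and, arguably, the theorem itself) needs.
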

\begin{proof}
    We will show that \cref{eq:valuation_bound} holds except with probability $\negl(n)$. Suppose w.log. that the valuations are sampled uniformly from $[0,1]$ and let $v_1 < v_2 < \cdots < v_n$ be the valuations. The value $v_i$ equals the $i^\text{th}$ order statistic whose distribution is well-known for uniform values. We are interested in the ratio $R=v_{n-k+1} / v_{n-m}$, so let $f(\cdot)$ be its density function. Let $k = m\delta$ for some $0<\delta<1$. By \cref{lemma:ratio_order_stats}, noting that we have $j=(1+\alpha-\delta)m+1, i=\alpha m$, we get that,
    \begin{align*}
        f(r)
        &= \frac{n!\,(r-1)^{m-1}}{(\alpha m - 1)!((1-\delta)m)!\,r^{n}}   \int_{0}^1 (1-u)^{(1+\alpha-\delta)m-1} u^{\delta m - 1} \, \mathrm{d}u\\
        &= \frac{((1+\alpha-\delta)m)!}{((1-\delta)m)!(\alpha m-2)!} \frac{(r-1)^{(1-\delta)m}}{r^n}.
    \end{align*}
    We denote by $H(p) = -p \lg p - (1-p) \lg(1-p)$, the binary entropy function, defined on $[0,1]$. Note that $H(p) \leq 1$ for every $p\in[0,1]$. A useful upper bound is given by the following. \begin{equation}\label{eq:entropy_upper_bound}
        H(x) \leq 2 \sqrt{x\,(1-x)}
    \end{equation}
    The binary entropy function is useful because it allows us to upper bound the binomial coefficient as follows.
    \begin{equation}\label{eq:binom_entropy_bound}
        {n \choose k} \leq 2^{nH(k/n)}
    \end{equation}
    We proceed to bound the probability that \cref{eq:valuation_bound} does not hold as follows.
    \begin{align*}
        &\Pr\left[R' \geq \frac{n-k}{n-m}\right]
        = \int_{\frac{1+\alpha-\delta}{\alpha}}^\infty f_R(r) \,\mathrm{d}r\\
        &= \frac{((1+\alpha-\delta)m)!}{((1-\delta)m)!(\alpha m-2)!}
        \int_{\frac{1+\alpha-\delta}{\alpha}}^\infty\frac{(r-1)^{(1-\delta)m}}{r^n}  \,\mathrm{d}r\\
        &\leq \frac{\alpha}{\alpha+\delta} {{(1+\alpha-\delta)m} \choose {\alpha m}} \left( \frac{1+\alpha-\delta}{\alpha}\right)^{1-(\alpha+\delta)m}\\
        \intertext{We now apply \cref{eq:binom_entropy_bound} and collect the terms in the exponent.}
        &\leq\frac{\alpha}{\alpha+\delta} \exp\Bigg( H\left(\frac{\alpha}{1+\alpha-\delta}\right)(1+\alpha-\delta)\,m \\
        &\hspace{3.3cm} +\log\left(\frac{1+\alpha-\delta}{\alpha}\right)(1-(\alpha+\delta)m) \Bigg)\\
        \intertext{We now use the fact that $H(p)\leq 2\sqrt{p(1-p)}$ as per \cref{eq:entropy_upper_bound} to obtain,}
        &\leq \frac{\alpha}{\alpha+\delta} \exp\Bigg( \log\left(\frac{1+\alpha\delta}{\alpha}\right)\\
        &\quad\quad\quad\quad\quad\quad\quad\quad + m \left[2\sqrt{\alpha(1-\delta)}-(\alpha+\delta) \log\left(\frac{1+\alpha\delta}{\alpha}\right)\right] \Bigg).
    \end{align*}
    We note that the exponent is negative for sufficiently large $m$, and hence the probability negligible if,
    \begin{equation*} \label{eq:sufficient_uniform}
        1 + \alpha - \delta-(\alpha+\delta) \log\left(\frac{1+\alpha\delta}{\alpha}\right)<0.
    \end{equation*}
    Which solves to $0<\alpha < 0.529914$ for $\delta=0.69$. 
\end{proof}

\begin{lemma}[Adler, \cite{pareto_ratio_order}]\label{lemma:pareto_order_stats}
    Let $X_1,X_2, \ldots X_n$ be i.i.d. Pareto distributed with parameter $p>0$. Let $i<j$ and define $R_{ij} = \frac{X_{(j)}}{X_{(i)}}$ and let $f(\cdot)$ be its density function with support $[1,\infty)$. Then for every $r\geq 1$,
    \begin{align*}
        f(r) = \frac{p\,(n-i)!}{(j-i-1)!(n-j)!} \left(1-\frac{1}{r^p}\right)^{j-i-1} \frac{1}{r^{p(n-j+1)+1}}.
    \end{align*} 
\end{lemma}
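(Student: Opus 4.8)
The plan is to derive the density of $R_{ij} = X_{(j)}/X_{(i)}$ directly from the joint density of the pair of order statistics $(X_{(i)}, X_{(j)})$, exploiting the multiplicative (scale) structure of the Pareto law. First I would record that the standard Pareto distribution with parameter $p>0$ has CDF $F(x) = 1 - x^{-p}$ and density $f_X(x) = p\,x^{-p-1}$ on the support $[1,\infty)$. The classical formula for the joint density of two order statistics $X_{(i)} = s \le t = X_{(j)}$ then reads
\begin{equation*}
    g(s,t) = \frac{n!}{(i-1)!\,(j-i-1)!\,(n-j)!}\,F(s)^{i-1}\,\big(F(t)-F(s)\big)^{j-i-1}\,\big(1-F(t)\big)^{n-j}\,f_X(s)\,f_X(t).
\end{equation*}

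Next I would change variables to $(s,r)$ with $t = rs$ and $r \ge 1$, whose Jacobian contributes a factor $s$, and integrate out $s$ over $[1,\infty)$, so that $f_R(r) = \int_1^\infty g(s,rs)\,s\,\mathrm{d}s$. The crucial step, and the one that makes the Pareto case tractable, is that every factor separates cleanly into an $r$-part and an $s$-part. Indeed $1 - F(rs) = (rs)^{-p} = r^{-p}s^{-p}$ and, more importantly, $F(rs) - F(s) = s^{-p}(1 - r^{-p}) = s^{-p}(1 - 1/r^p)$, so the $(1-1/r^p)^{j-i-1}$ factor, which carries the entire $r$-dependence of the ``middle'' block, pulls out of the $s$-integral. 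Collecting the powers of $r$ coming from $(F(rs)-F(s))^{j-i-1}$, from $(1-F(rs))^{n-j}$, and from $f_X(rs)$ yields exactly $(1-1/r^p)^{j-i-1}\,r^{-p(n-j+1)-1}$, matching the two $r$-dependent factors in the claimed expression.

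It then remains to evaluate the residual $s$-integral, which after collecting exponents takes the form $\int_1^\infty (1 - s^{-p})^{i-1}\,s^{-p(n-i+1)-1}\,\mathrm{d}s$. I would substitute $u = s^{-p}$, so that $u$ runs from $1$ down to $0$ and $s^{-p-1}\,\mathrm{d}s = -\tfrac1p\,\mathrm{d}u$, turning this into $\tfrac1p\int_0^1 (1-u)^{i-1}\,u^{n-i}\,\mathrm{d}u = \tfrac1p\,B(n-i+1,\,i) = \tfrac1p\cdot\tfrac{(n-i)!\,(i-1)!}{n!}$, a Beta integral. Multiplying this value against the prefactor $\tfrac{n!}{(i-1)!\,(j-i-1)!\,(n-j)!}$ and against the $p^2$ carried by the two densities $f_X(s)\,f_X(rs)$, the factors $n!/(i-1)!$ and $(i-1)!/n!$ cancel and exactly one power of $p$ survives, leaving $\tfrac{p\,(n-i)!}{(j-i-1)!\,(n-j)!}$ as the constant. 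Combined with the $r$-factors, this reproduces the stated density. I do not anticipate a genuine obstacle: the only delicate point is bookkeeping the exponents of $s$ and $r$ correctly through the substitution, and the main structural insight, namely that the scale invariance of the Pareto distribution forces the joint density to factor so that the $s$-integral becomes a pure Beta integral independent of $r$, is precisely what makes the final answer this clean.
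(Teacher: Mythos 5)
The paper does not prove this lemma; it is imported verbatim as a cited result of Adler, so there is no in-paper argument to compare yours against. Your derivation is correct and complete: the joint order-statistic density is the standard one, the change of variables $t = rs$ with Jacobian $s$ is right, and the key factorizations $F(rs)-F(s) = s^{-p}(1-r^{-p})$ and $1-F(rs) = r^{-p}s^{-p}$ do pull the entire $r$-dependence out of the $s$-integral. I checked the exponent bookkeeping: the powers of $r$ collect to $-p(n-j+1)-1$ and the powers of $s$ to $-p(n-i+1)-1$, the substitution $u = s^{-p}$ turns the residual integral into $\tfrac1p B(n-i+1,i) = \tfrac1p\,\tfrac{(n-i)!\,(i-1)!}{n!}$, and the constants cancel to $\tfrac{p\,(n-i)!}{(j-i-1)!\,(n-j)!}$ exactly as claimed (one can also confirm the resulting density integrates to $1$ via $w = r^{-p}$). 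This is a clean, self-contained proof of a statement the paper merely cites.
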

\begin{thm}
    Suppose $n$ buyers participate in an auction of $m$ identical items where $n=(1+\alpha)\,m$ for some $\alpha>0$. If the valuations of the items are sampled according to a Pareto distribution with parameter $p>1$ and $0\leq \alpha \leq \alpha(p)<0.69$, then first-price auctions are not Stackelberg resilient, except with probability negligible in $m$.
\end{thm}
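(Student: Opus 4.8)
The plan is to mirror the structure of the preceding (uniform) theorem, replacing \cref{lemma:ratio_order_stats} by the Pareto density \cref{lemma:pareto_order_stats} and showing that \cref{eq:valuation_bound} fails only with probability $\negl(m)$. As in that proof I take $B=0$, write the relevant ratio as $R = v_{n-k+1}/v_{n-m} = X_{(j)}/X_{(i)}$, and set $k = \delta m$ for some $\delta \in (0,1)$, so that $i = n-m = \alpha m$ and $j = n-k+1 = (1+\alpha-\delta)m+1$. By \cref{thm:eip_not_resilient} it suffices to bound $\Pr\!\left[R \ge \tfrac{n-k}{n-m}\right] = \Pr\!\left[R \ge \tfrac{1+\alpha-\delta}{\alpha}\right]$ and to identify the largest $\alpha$ for which this is negligible; that threshold is the function $\alpha(p)$ in the statement.

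First I would substitute these indices into \cref{lemma:pareto_order_stats}. Computing $n-i = m$, $j-i-1 = (1-\delta)m$, and $n-j = \delta m - 1$, the factorial prefactor collapses into a single binomial coefficient, giving the clean form $f(r) = p\,\delta m \binom{m}{\delta m}\,(1-r^{-p})^{(1-\delta)m}\,r^{-(p\delta m + 1)}$, the Pareto analogue of the tidy expression obtained in the uniform case. I would then bound the tail integral $\int_{r^*}^{\infty} f(r)\,\mathrm{d}r$ with $r^* = \tfrac{1+\alpha-\delta}{\alpha} > 1$. The cleanest route is the substitution $t = r^{-p}$, which converts it into an incomplete Beta integral $\delta m \binom{m}{\delta m}\int_0^{t_0} t^{\delta m - 1}(1-t)^{(1-\delta)m}\,\mathrm{d}t$ with $t_0 = (\alpha/(1+\alpha-\delta))^p < 1$; bounding $(1-t)^{(1-\delta)m} \le 1$ yields the closed form $\Pr[R \ge r^*] \le \binom{m}{\delta m}\,(\alpha/(1+\alpha-\delta))^{p\delta m}$. (Equivalently one may bound $(1-r^{-p})^{(1-\delta)m}\le 1$ directly and integrate $r^{-(p\delta m + 1)}$.)

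Next I would apply the entropy bound \cref{eq:binom_entropy_bound}, $\binom{m}{\delta m} \le 2^{m H(\delta)}$, and then \cref{eq:entropy_upper_bound} to replace $H(\delta)$ by $2\sqrt{\delta(1-\delta)}$. Collecting the exponents, the bound takes the form $\exp\!\big(m\, g(\alpha,\delta,p)\big)$ with a per-$m$ exponent of the shape $g(\alpha,\delta,p) \;=\; 2\sqrt{\delta(1-\delta)} - p\,\delta\,\log\!\tfrac{1+\alpha-\delta}{\alpha}$, exactly paralleling the $2\sqrt{\alpha(1-\delta)} - (\alpha+\delta)\log(\cdots)$ term in the uniform proof, except that here the polynomial tail decay of the Pareto law contributes the factor $p$. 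Negligibility holds as soon as some admissible $\delta$ makes $g < 0$, and solving this transcendental inequality for the largest such $\alpha$ defines $\alpha(p)$.

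The final step is to optimize over $\delta$ and analyze the limit $p \to \infty$, and I expect this to be the main obstacle. The inequality $g<0$ is transcendental in all of $\alpha,\delta,p$ simultaneously, so extracting a usable description of $\alpha(p)$ requires choosing $\delta = \delta(p)$ carefully (as the fixed value $\delta = 0.69$ was chosen in the uniform case) and tracking how the logarithmic term $\log\tfrac{1+\alpha-\delta}{\alpha}$ trades off against the entropy term as the tail sharpens with $p$; the precise constant $\approx 0.69$ in $\lim_{p\to\infty}\alpha(p)$ emerges only from this optimization, and getting it cleanly is the delicate part. By contrast, the substitution, the factorial-to-binomial simplification, and the entropy bounds are routine adaptations of the uniform argument and should go through without surprises.
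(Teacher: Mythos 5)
Your proposal follows the paper's proof essentially step for step: same reduction to bounding $\Pr[R \geq \tfrac{n-k}{n-m}]$ via \cref{thm:eip_not_resilient}, same use of \cref{lemma:pareto_order_stats} with $k=\delta m$, same crude bound $(1-r^{-p})^{j-i-1} \leq 1$ followed by integrating the power tail, and the same pair of entropy bounds \cref{eq:binom_entropy_bound} and \cref{eq:entropy_upper_bound}, arriving at the identical sufficient condition $\delta p \log\bigl(\tfrac{1+\alpha-\delta}{\alpha}\bigr) > 2\sqrt{\delta(1-\delta)}$. (Your index bookkeeping, giving the prefactor $p\,\delta m\binom{m}{\delta m}$ and exponent $(1-\delta)m$, is in fact cleaner than the paper's displayed density, which carries what appear to be off-by-two slips in the factorials; these do not affect the asymptotics.)

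The one place you stop short is the step you yourself flag as delicate, and it is not optional: the quantity $\alpha(p)$ in the theorem statement is \emph{defined} only through this final optimization, so without it the theorem has no content. The paper does not optimize over $\delta$ in any sophisticated way; it simply plugs in the explicit choice $\delta = \tfrac{5}{p^2+4}$, under which the inequality above is implied by
$$
0<\alpha <\frac{p^2-1}{(4+p^2)\left(\exp\left(\frac{2}{\sqrt{5}}\sqrt{\tfrac{p^2-1}{p^2}}\right) - 1\right)},
$$
and the right-hand side is taken as the definition of $\alpha(p)$; one then checks $\alpha(p)>0$ for all $p>1$ and that $\alpha(p) \to \tfrac12(\coth(1/\sqrt{5})-1)\approx 0.69$ as $p\to\infty$. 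So the missing piece is a single concrete substitution rather than a genuinely hard transcendental optimization; to complete your argument you need only exhibit some such $\delta(p)$ and verify the resulting elementary inequalities.
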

\begin{proof}
    Suppose for the sake of the argument that $n$ is even, and let $C$ be the $m\delta$ players with the largest valuations for some constant $0<\delta<1$. Let $R = (v_{n-k+1} / v_{n-m})$ and let $f(\cdot)$ be its density function. By \cref{lemma:pareto_order_stats}, it is given by,
        \begin{align*}
            f(r) 
            &= \frac{pm!}{(m - k-2)!(k-1)!} \left(1-\frac{1}{r^p}\right)^{m-k-2}\,\frac1{r^{pk+1}}\\
            &= pk (m-k-1)(m-k){m \choose k} \left(1-\frac{1}{r^p}\right)^{m-k-2}\,\frac1{r^{pk+1}}.
        \end{align*}
        We proceed to bound the probability that \cref{eq:valuation_bound} does not hold as follows.
        \begin{align*}
            &\Pr\left[R \geq \frac{n-k}{n-m}\right] \\
            &= pk (m-k-1)(m-k) {m \choose k} \int_{\frac{1+\alpha-\delta}{\alpha}}^\infty \frac{\left(1-\frac{1}{r^p}\right)^{m-k-2}}{r^{pk+1}} \, \mathrm{d}r \\
            &\leq pk (m-k-1)(m-k) {m \choose k} \int_{\frac{1+\alpha-\delta}{\alpha}}^\infty \frac{1}{r^{pk+1}} \, \mathrm{d}r\\
            &= m((1-\delta)m-1)(1-\delta) {m \choose {\delta m}} \left(\frac{1+\alpha-\delta}{\alpha}\right)^{-p\delta m}\\
        \intertext{We now bound the binomial coefficient using \cref{eq:binom_entropy_bound} and collect the terms in the exponent.}    
            &\leq m((1-\delta)m-1)(1-\delta) \exp\left( m \left[ H(\delta) -  \delta p\log\left(\frac{1+\alpha-\delta}{\alpha}\right) \right] \right)
        \end{align*}
        We note that the exponent is negative, and hence the function negligible, if the following inequality is satisfied.
        \begin{align*}
            \delta p \log\left(\frac{1+\alpha-\delta}{\alpha}\right) > H(\delta).
        \end{align*}
        By \cref{eq:entropy_upper_bound}, it suffices then to establish the following bound.
        \begin{align*}
            \delta p \log\left(\frac{1+\alpha-\delta}{\alpha}\right) > 2 \sqrt{\delta(1-\delta)}.
        \end{align*}
        We now let $\delta = \frac{5}{p^2 + 4}$, and note that this inequality is satisfied for any $p>1$ whenever the following inequality holds.
        $$
            0<\alpha <\frac{p^2-1}{(4+p^2)\left(\exp\left(\frac{2\sqrt{\frac{p^2-1}{p^2}}}{\sqrt{5}}\right) - 1\right)}
        $$
        Denote the rhs by $\alpha(p)$. Note that $\alpha(p)>0$ for any $p>1$ and evaluates to $\frac12(\mathrm{coth}(1/\sqrt{5}) - 1) \approx 0.69$ in the limit as $p \rightarrow \infty$. 
\end{proof}

\section{Conclusion}
In this paper, we demonstrated a Stackelberg attack on auctions and transaction fee mechanisms. Our work suggests that blockchains may be susceptible to these attacks and calls into question the extent to which auctions should be used in these systems, or whether these systems should be supplemented with e.g. lotteries.

\bibliography{ecai}
\end{document}